\documentclass[a4paper,11pt]{article}

\usepackage[left=2.5cm,right=2.5cm,top=3cm,bottom=3cm]{geometry}

\usepackage{amsmath,amssymb,amsthm,enumerate}
\usepackage{enumitem}
\usepackage{mathtools}
\usepackage{hyperref}
\usepackage{xcolor}
\usepackage{bm}
\usepackage{standalone}
\usepackage{comment}
\usepackage{listings} 
\usepackage{orcidlink}
\usepackage{upgreek}
\usepackage{comment}
\usepackage{mathrsfs}

\newtheorem{theorem}{Theorem}
\numberwithin{theorem}{section}
\newtheorem{corollary}[theorem]{Corollary}

\theoremstyle{definition}
\newtheorem{definition}[theorem]{Definition}
\newtheorem{remark}[theorem]{Remark}

\newcommand{\beq}{\begin{equation}}
\newcommand{\abs}[1]{|#1|}
\newcommand{\eeq}{\end{equation}}
\newcommand{\SO}{\mathrm{SO}}
\newcommand{\midd}{\textup{s.t.}}
\newcommand{\diag}{\operatorname{diag}}
\newcommand{\cR}{\mathcal{R}}
\newcommand{\QQ}{\mathbb{Q}}
\newcommand{\ZZ}{\mathbb{Z}}
\newcommand{\UU}{\mathbb{U}}
\newcommand{\HH}{\mathbb{H}}
\newcommand{\RR}{\mathbb{R}}
\newcommand{\mM}{\mathsf{M}}
\newcommand{\ii}{\mathbf{i}}
\newcommand{\jj}{\mathbf{j}}
\newcommand{\kk}{\mathbf{k}}
\newcommand{\N}{\mathbb{N}}

\newcommand{\rn}{\mathrm{nrd}}
\newcommand{\ch}{\mathrm{char}}
\newcommand{\Id}{\mathrm{Id}}
\newcommand{\Hi}{\mathbf{H}}
\newcommand{\Nrd}{\mathrm{nrd}}
\newcommand{\Trd}{\mathrm{trd}}
\newcommand{\F}{\mathbb{F}}
\newcommand{\Al}{\mathfrak{A}}
\newcommand{\End}{\mathrm{End}_{\small\mbox{$\QQ_{p,v}$}}}
\DeclareMathAlphabet{\mathdutchcal}{U}{dutchcal}{m}{n}

\allowdisplaybreaks 

\definecolor{darkgreen}{rgb}{0.0, 0.8, 0.5}   %for Ilaria's notes
\begin{document}

\title{The Haar measure of the $p$-adic rotation group $\SO(3)_p$\\ via nautical angles}

\author{
Lorenzo Guglielmi,$^{1,2}$\thanks{Email: lorenzo.guglielmi@unicam.it}\,\orcidlink{0009-0006-1624-049X} \hspace{1.2mm}
Stefano Mancini,$^{1,2}$\thanks{Email: stefano.mancini@unicam.it}\, \orcidlink{0000-0002-3797-3987} \hspace{1.2mm}
Vincenzo Parisi,$^{3}$\thanks{Email: vincenzo.parisi@iit.it}\, \orcidlink{0000-0001-9563-6471} \hspace{1.2mm} and Ilaria Svampa,$^{4}$\thanks{Email: ilaria.svampa@uni-koeln.de}\, \orcidlink{0000-0002-1389-0319}}

\date{
{\normalsize
$^1$\textit{School of Science and Technology, University of Camerino,\\
via Madonna delle Carceri, 9, Camerino, I-62032, Italy}\\
\vspace{3mm}
$^2$ \textit{Istituto Nazionale di Fisica Nucleare, Sezione di Perugia,\\
via A.~Pascoli, I-06123 Perugia, Italy}\\
\vspace{3mm}
$^3$ \textit{CONCEPT Lab, Fondazione Istituto Italiano di~Tecnologia,\\
via E.~Melen 83, Genova, 16152, Italy}\\
\vspace{3mm}
$^4$ \textit{Department Mathematik/Informatik--Abteilung Informatik, Universit\"at zu K\"oln, Albertus-Magnus-Platz, 50923 K\"oln, Germany}
}}

\maketitle
	
\begin{abstract}
\noindent We study the explicit construction of the Haar measure on the compact $p$-adic rotation group $\SO(3)_p$ by nautical (Cardano) parametrization. Exploiting its topological group isomorphism with $\HH_p^\times/\QQ_p^\times$ of $p$-adic quaternions modulo scalars, we derive the corresponding change of variables formulas and compute the associated Jacobian in the $p$-adic setting, which we combine with the known Haar measure on the multiplicative group of $p$-adic quaternions $\mathbb{H}_p^\times$. This yields an explicit formula for the normalized Haar measure on $\SO(3)_p$ in nautical coordinates, with a factorized density in the three angles. Our construction provides a concrete tool suited for applications of non-Archimedean models where an explicit angular description of invariant integration is required.

\end{abstract}

\tableofcontents

%%====================================================================================

\section{Introduction}

Rotation groups play a central role in applied mathematics, and it is no coincidence that the special orthogonal groups, denoted $\SO(n)_{\RR}$, rank among the most thoroughly explored and widely recognized mathematical constructs. The case $n=3$ stands out for its rich algebraic structure and the ease with which its actions on Euclidean space can be visualized. Actually, $\SO(3)_{\RR}$ encompasses all rotations around axes in three-dimensional space, with each element admitting an essentially unique decomposition into Euler or nautical angles\footnote{Nautical angles are also known as Cardano angles or as Tait-Bryan angles.}.

The prominence of $\SO(n)_{\RR}$ is closely tied to the natural assumption that physical space corresponds to the Euclidean space $\RR^3$, which itself is built upon the real numbers $\RR$ as the metric completion of the rationals $\QQ$. Yet, it's important to recognize that $\RR$ is not the only possible completion of $\QQ$. According to Ostrowski’s theorem, there exists a countable family of alternative completions—specifically, the $p$-adic number field $\mathbb{Q}_p$, where $p$ is a prime \cite{serre1973}. The fractal-like geometry of $p$-adic numbers has attracted interest in theoretical physics, particularly in proposals for “non-Euclidean” space-time models at extremely small (or large) scales, as might be relevant in quantum gravity or string theory \cite{Araf,FW,Vol}. More recently, there has been also a growing attention to the potential applications of $p$-adic fields in quantum information theory \cite{our2nd,AMP23,SIMW26,aniello2024quantum,vp}.

Given the central role of $p$-adic numbers in number theory, it is unsurprising that rotation groups have been examined over the fields $\QQ_p$. Special orthogonal groups in the $p$-adic setting are constructed from quadratic forms over $\QQ_p$. Unlike the real case, however, definite quadratic forms over $\QQ_p$ exist only in dimensions two, three, and four \cite{our1st}. As a result, the groups $\SO(2)_p$,  $\SO(3)_p$, and $\SO(4)_p$ are the only compact special orthogonal groups in the $p$-adic context. Notably, $\SO(3)_p$ can be viewed as the group of rotations in $\QQ_p^3$, and its geometric structure has been investigated in \cite{our1st}.

Because these groups are compact, the Peter–Weyl theorem guarantees that every irreducible unitary representation is realized as a sub-representation within the regular representation \cite{folland2016course}. The analysis of the regular representation of compact groups—and other core elements of abstract harmonic analysis—fundamentally depends on the Haar measure, a uniquely determined Radon measure on the group. Interpreted as a functional, this measure is known as the Haar integral \cite{folland99}.
In \cite{our3rd}, the Haar integral on $\SO(3)_p$ was investigated using a Lie group–theoretical approach, based on the construction of a suitable atlas of local charts. In \cite{inverse}, the Haar measure on $\SO(3)_p$ was derived through a different, more universal framework—namely, the inverse limit of measure spaces. However, for practical applications, it is preferable to express this measure in terms of angles that characterize rotations, analogous to the familiar case in $\RR^3$. This is the objective we pursue here. Since Euler angles are not available in the $p$-adic setting, we instead employ Cardano decompositions and nautical angles existing for any $p>2$ \cite{our1st}. We then establish a topological and group-theoretic correspondence between $\SO(3)_p$ and the multiplicative group of $p$-adic quaternions, which ultimately enables us to express the Haar measure on $\SO(3)_p$ explicitly in terms of nautical angles.

The structure of the paper is as follows. Section 2 provides a review of $p$-adic rotation groups and introduces the Cardano decomposition in terms of nautical angles for $\SO(3)_p$. In Section 3, we revisit the $p$-adic quaternion algebra, emphasizing the topological and group-theoretic relationship between $\SO(3)_p$ and the multiplicative group of $p$-adic quaternions. Section 4 marks the beginning of our original contributions, where we construct the explicit isomorphism between $p$-adic quaternions and $p$-adic rotations. This construction enables us, in Section 5, to derive the Haar measure on $\SO(3)_p$ in terms of nautical angles. Finally, Section 6 presents our conclusions.

\section{$p$-Adic rotation groups}
In this section, we establish our main conventions and recall some preliminary facts concerning the $p$-adic special orthogonal groups, that will be constantly used throughout this work. At the most general level, a special orthogonal group arises from a distinguished class of linear symmetries of a quadratic form $Q\colon V\rightarrow \F$ defined over an $\F$-vector space $V$. Indeed, if $\F$ has characteristic $\ch(\F)\neq 2$ (as it is for $\mathbb{R}$ and $\QQ_p$), there is a one-to-one correspondence %relation 
between (non-degenerate) quadratic forms and bilinear forms on $V$ \cite{quat2}. The special orthogonal group $\SO(V)$ is then defined as the group of all the \emph{special} ---i.e., with determinant equal to 1--- \emph{isometries} of $Q$. Equivalent quadratic forms lead to isomorphic special orthogonal groups. A special orthogonal group $\SO(V)$ is \emph{compact} if and only if its associated quadratic form is non-isotropic, which corresponds to definite over $\mathbb{R}$ (we will use both terms interchangeably over $\mathbb{Q}_p$). In this work, we shall be mainly interested in the case where $V\equiv \QQ_p^n$, and we will consider $n=2,3,4$ only, as definite quadratic forms over $\QQ_p$ exist solely in these dimensions \cite{our1st}. %We recall that 

An element $\eta$ in $\QQ_p$ is called a $p$-adic unit if $\eta\in\QQ_p^\times$, and $\abs{\eta}_p=1$; adhering to the standard notation, we shall denote by $\UU_p$ the group of $p$-adic units in $\QQ_p$ ---that is, the group of invertible elements in the ring $\ZZ_p$ of $p$-adic integers. A non-quadratic $p$-adic unit is any $u\in\UU_p$ such that $u\notin(\QQ^\times_
p)^2$. The quotient group $\QQ^\times_p/(\QQ^\times_p)^2$ consists of four ‘square classes’, for $p\neq 2$, with
representatives $\{1, u, p, up\}$ ---where $u$ is any non-quadratic unit of $\QQ_p$--- whereas,
for $p = 2$, it consists of eight square classes, with representatives $\{1, 2, 3, 5, 6, 7, 10, 14\}$ (or,
equivalently, $\{\pm 1,\pm 2,\pm 3,\pm 6\}$). 
Let $p>2$ be a prime number. Using the square classes in $\QQ^\times_p/(\QQ^\times_p)^2$, one can give a complete classification of the definite quadratic forms over $\QQ_p$~\cite{serre1973}. Indeed, let $u\in\UU_p$ be a non-quadratic unit, and let $v\in\UU_p$ be defined as
\begin{equation}\label{v}
    \UU_p\ni v\coloneqq
\begin{cases}
 -1 &\text{if}\;\; p\equiv3 \mod4,\\
 -u &\text{if}\;\; p\equiv1 \mod4.
\end{cases}
\end{equation}
Then, up to linear equivalence and scaling, one can show there are precisely three non-isotropic quadratic forms over $\QQ_p^2$ for every prime $p>2$, i.e., 
\begin{equation}\label{eq:quadrform2}
Q_{-v}(\bm{x})= x_1^2-vx_2^2, \quad Q_p(\bm{x})=x_1^2+px_2^2,\quad Q_{up}(\bm{x})= u x_1^2+px_2^2,
\end{equation}
as we all as a unique definite quadratic form $Q_+(\bm{x})$ on $\QQ_p^3$, and $Q_+^{(4)}$ on $\QQ_p^4$:
\begin{equation}\label{eq:quadrform3}
Q_+(\bm{x})=x_1^2-vx_2^2+px_3^2,\quad Q_+^{(4)}(\bm{x})=x_1^2 - v x_2^2 + p x_3^2 - vp x_4^2.
\end{equation}
The special orthogonal groups in dimension two, three and four can now be defined as the groups of the special linear symmetries of the quadratic forms in Eq.\eqref{eq:quadrform2} and Eq.\eqref{eq:quadrform3}; that is, we have the following
\begin{definition}[\cite{our1st}]
%Denoting by 
Let $A_d \coloneqq \diag(1,d)$, $d\in\{-v, p, up\}$, $A_+\coloneqq \diag(1,-v,p)$, and $A_+^{(4)}\coloneqq\diag(1,-v,p,-vp)$ be the matrix representation (w.r.t.\ the canonical basis of $\QQ_p^n$, $n=2,3,4$) of the non-isotropic quadratic forms in $\QQ_p^2$, $\QQ_p^3$ and $\QQ_p^4$, respectively for every prime $p>2$.
Then, we can define, up to isomorphism, the following compact $p$-adic special orthogonal groups:
\begin{align} 
&\SO(2)_{p,d}\coloneqq\left\{L\in \mM(2,\QQ_p)\mid  L^\top A_d L=A_d,\ \det(L)=1\right\},\nonumber\\
& \SO(3)_p\coloneqq \left\{L\in \mM(3,\QQ_p)\mid L^\top A_+L=A_+,\ \det(L)=1\right\},\nonumber\\
&\SO(4)_p\coloneqq \left\{L\in \mM(4,\QQ_p)\mid  L^\top A_+^{(4)}L=A_+,\ \det(L)=1\right\},
\end{align}
where we denoted by $\mM(n,\QQ_p)$ the associative algebra of $n\times n$ matrices over the field of $p$-adic numbers $\QQ_p$.
\end{definition}

From Theorem~$12$ in~\cite{our1st}, we have a complete characterization of the special orthogonal group $\SO(2)_{p,d}$. Indeed, any element of $\SO(2)_{p,d}$ admits the following one-to-one parametrization in terms of the $p$-adic projective line:
\begin{equation}\label{eq:rotazgeneric2} 
\QQ_p\cup\{\infty\}\ni\alpha\mapsto \cR_d(\alpha)=
\begin{pmatrix}
  \frac{1-d\alpha^2}{1+d\alpha^2}
    & -\frac{2d\alpha}{1+d\alpha^2}\\
  \frac{2\alpha}{1+d\alpha^2}
    & -\frac{1-d\alpha^2}{1+d\alpha^2}
\end{pmatrix}\in \SO(2)_{p,d},
\end{equation}
where $\cR_d(\infty)=-\cR_d(0)=-I_2$ ---i.e., the identity element in $\mM(2,\QQ_p)$. The composition of any two rotations in $\SO(2)_{p,d}$ is provided, for fixed $d$, by 
\begin{equation}\label{eq:complowSO2p}
\cR_d(\alpha)\cR_d(\beta)=\cR_d\bigg(\frac{\alpha+\beta}{1-d\alpha\beta}\bigg),
\end{equation}
for every $\alpha,\beta\in\QQ_p\cup\{\infty\}$.

\begin{remark}
We notice that for every prime $p>2$ and $d\in\{-v,p,-vp\}$ there exists $\bm{n}\in\QQ_p^3\setminus\{\mathbf{0}\}$ such that $\SO(3)_{p,\QQ_p\bm{n}}\simeq \SO(2)_{p,d}$, where we denoted by $\SO(3)_{p,\bm{n}}$ the subgroup of $\SO(3)_p$ of rotations around $\QQ_p\mathbf{n}$ --- see Proposition~$21$ in~\cite{our1st}. Specifically, for the rotations around the reference axes $x=\QQ_p\mathbf{e}_1,\,y=\QQ_p\mathbf{e}_2$ and $z=\QQ_p\mathbf{e}_3$ of $\QQ_p^3$, we have:
\begin{equation}
\SO(3)_{p,x}\simeq\SO(2)_{p,-p/v},\quad \SO(3)_{p,y}\simeq \SO(2)_{p,p},\qquad \SO(3)_{p,z}\simeq \SO(2)_{p,-v}.
\end{equation}
\end{remark}
The groups $\SO(2)_{p,d}$, $\SO(3)_p$ and $\SO(4)_p$ are topological groups once endowed with the natural ultrametric topology induced by the ultrametric norm $\|\cR\|_p\equiv\|(\cR_{ij})_{i,j}\|\coloneqq \max_{i,j}\abs{\cR_{ij}}_p$. For every prime $p>2$, the entries of the matrices of these groups are in $\mathbb{Z}_p$ and then these groups are bounded with respect to the $p$-adic norm. They are also closed and hence compact --- see Corollary II$.12$ in \cite{inverse} and Theorem $2.19$ in \cite{thesis}.
Here we stress that our assumption of considering only the special orthogonal groups associated with non-isotropic quadratic forms cannot be dispensed with, since those groups defined on indefinite quadratic forms are not bounded, whence, not compact.

\begin{remark}
The $p$-adic special orthogonal groups in dimension two, three and four are special instances of second countable \emph{$p$-adic Lie groups}. As such, they are totally disconnected, locally compact, Hausdorff spaces. Moreover, as topological spaces, they are \emph{strictly paracompact} ---i.e., every open cover of $\SO(2)_{p,d}$, and $\SO(n)_p$, $n=3,4$, admits a refinement consisting of pairwise disjoint open sets--- as well as \emph{Polish}, since they admit a second countable topology~\cite{our3rd}. 
\end{remark}

From Theorem~22 and Corollary~23 in \cite{our1st} we have that for every prime $p>2$, any $\cR\in \SO(3)_p$ can be written as any of the compositions 
\beq\label{eq:Cardanodecos}
\cR_z\cR_y\cR_x, \quad \cR_z\cR_x\cR_y, \quad \cR_x\cR_y\cR_z, \quad \cR_y\cR_x\cR_z,
\eeq
respectively by certain angles $\alpha,\beta,\gamma\in\QQ_p\cup\infty$. These angles are called \emph{Tait-Bryan}
angles or \emph{nautical} angles or \emph{Cardano} angles.
Moreover, from Remark~27 in \cite{our1st}, no Euler decomposition exists for $\SO(3)_p$ for primes $p>2$. From Remark~28 in \cite{our1st}, for $p=2$, none of the nautical and Euler decompositions exist for $\SO(3)_2$.

\section{$p$-Adic quaternions}
Our next aim is to recall some fundamental facts concerning the quaternion algebra $\HH_p$ over the field $\QQ_p$ of $p$-adic numbers~\cite{quat2,kochubei2001pseudo,quat1}.

We start with the  following 

\begin{definition}\label{def.3.1}

Let $p>2$ be a prime number. By the\textit{ $p$-adic quaternion algebra} over $\QQ_p$ we mean the (division) algebra $\HH_p$ with $\QQ_p$-basis $\mathfrak{B}=(\Id, \ii, \jj, \kk \coloneqq \ii \jj )$ ---where $\Id$ is the identity element in $\HH_p$--- satisfying the following multiplication rules, 
\beq \label{eq.quatalgodd}
\ii^2=v, \ \jj^2=-p, \ \jj\ii=-\ii\jj,
\eeq
where $v\in\UU_p$ is the $p$-adic quadratic unit defined in Eq.\eqref{v}.
\end{definition}
Definition~\ref{def.3.1} of $p$-adic quaternion algebras is actually \emph{symmetric} in $v$ and $-p$, in the sense that if $\HH_p^\prime$ is the quaternion algebra with $\QQ_p$-basis $\mathfrak{B}$, such that $\ii^2 = -p$ and $\jj^2 = v$, then $\HH_p^\prime \simeq \HH_p$, i.e., the map interchanging $\ii$ and $\jj$ is an algebra isomorphism. 
\begin{remark}
A quaternion algebra over $\QQ_p$ can be equivalently defined by saying that the division $\QQ_p$-algebra $\HH_p$ is a quaternion algebra if there exist $\ii,\jj$ in $\HH_p$ which generate $\HH_p$ as a $\QQ_p$-algebra, and satisfy conditions~\eqref{eq.quatalgodd}. From this, it is then automatically true that $\HH_p$ has dimension $4$ as a $\QQ_p$-vector space, and admits the $\QQ_p$-basis $\mathfrak{B}=\{\Id, \ii, \jj, \kk\coloneqq \ii \jj\}$. 
\end{remark}
From the very definition, it is clear that $\HH_p\simeq \QQ_p \times \QQ_p^3$. In particular, any quaternion can be expressed as 
\begin{equation}
    \upxi = q_0+ q_1\ii +q_2\jj + q_3\kk,
\end{equation}
for some $q_0,q_1,q_2,q_3\in\QQ_p$. It is further possible to associate with every quaternion $\upxi$ in $\HH_p$ a suitable matrix in $\mM(2,\QQ_{p,v})$, i.e., a $2\times 2$ matrix over the quadratic extension $\QQ_{p,v}\coloneq\QQ_p(\sqrt{v})$ of the field $\QQ_p$. To prove this, one first notices that $\HH_p$ has a natural structure of a $\QQ_{p,v}$-right vector space of dimension $2$, with basis $\{\Id,\jj\}$; that is, $\HH_p\simeq \QQ_{p,v}\oplus \jj \ \QQ_{p,v}$, and every $\upxi\in\HH_p$ can be expressed as 
\begin{equation}
\upxi = q_0+q_1\ii + q_2\jj +q_3\kk = (q_0+q_1\ii)\Id+ \jj(q_2-q_3\ii). 
\end{equation}
Next, consider the \emph{left-regular representation} $\mathcal{L}\colon \HH_p\rightarrow \End(\HH_p)$ ---given by left multiplication in $\HH_p$--- which associates with every quaternion $\upxi$ in $\HH_p$ a $\QQ_p$-linear endomorphism in $\End(\HH_p)$. In the basis $\{\Id, \jj\}$, we have that $\End(\HH_p)\simeq \mM(2,\QQ_{p,v})$. It is then clear that $\mathcal{L}$ provides an injective $\QQ_p$-algebra homomorphism between $\HH_p$ and $\mM(2,\QQ_{p,v})$, which is an isomorphism to its image $\mathcal{L}(\HH_p)\eqqcolon \Hi_p\subset\mM(2,\QQ_{p,v})$. In particular, observing that $\mathcal{L}$ acts on the basis elements $\ii$ and $\jj$ of $\HH_p$ as
\begin{equation}
\mathcal{L}(\ii) = 
\begin{pmatrix}
\sqrt{v} & 0 \\
0 &  -\sqrt{v}
\end{pmatrix},\quad
\mathcal{L}(\jj)=
\begin{pmatrix}
0 & -p\\
1 & 0
\end{pmatrix},
\end{equation}
we can conclude that the general form of an element $M$ in $\Hi_p$ ---and, hence, of the matrix associated with the quaternion $\upxi=q_0+q_1\ii + q_2\jj + q_3\kk$--- is given by 
\begin{equation}
M=
\begin{pmatrix}
q_0+q_1\sqrt{v} & -p(q_2+q_3\sqrt{v})\\
q_2-q_3\sqrt{v} & q_0-q_1\sqrt{v} 
\end{pmatrix}.
\end{equation}
To proceed further in our investigation of $p$-adic quaternion algebras, we now endow $\HH_p$ with a suitable \emph{involution} operation. In its most general setting, an involution is defined over an abstract $\F$-algebra $\Al$ as follows
\begin{definition}\label{def.3.3}
An \emph{involution} is a map $\overline{(\hspace{0.12mm}\cdot\hspace{0.12mm})}\colon \Al\rightarrow \Al$, over an $\F$-algebra $\Al$ into itself, which preserves the identity in $\Al$, i.e., $\overline{\Id}=\Id$, has degree two, namely, $\overline{\overline{\alpha}}=\alpha$, $\forall \alpha\in\Al$, and is an anti-automorphism of $\Al$, that is, $\overline{\alpha\beta}=\overline{\beta}\overline{\alpha}$, $\forall \alpha,\beta\in\Al$. We say that the involution is \emph{standard} if $\overline{\alpha}\alpha\in\F$ for all $\alpha$ in $\Al$.
\end{definition}
\begin{remark}
It is well known that an $\F$-algebra $\Al$, with $\ch(\F)\neq 2$, admits a standard involution if and only if $\Al$ has degree at most $2$ --- where, we recall that the degree of an algebra is the smallest integer $m\in\N$ such that every element $\alpha\in\Al$ satisfies a monic polynomial $f(x)\in\F[x]$ of degree $m$ (if no such integer $m$ exists one says that $\Al$ has degree $\infty$). Moreover, a division $\F$-algebra is non-commutative and of degree $2$ if and only if it is a quaternion algebra (see Corollary~$3.5.6$ in~\cite{quat1}). Therefore, we see that
among the non-commutative $\F$-algebras, the existence of a standard involution characterizes precisely the quaternion algebras: If $\Al$ is a non-commutative $\F$-algebra admitting a standard involution --- which is necessarily unique --- then it is a quaternion algebra of degree 2; conversely, every division quaternion algebra over $\F$ has degree at most $2$, and has a standard involution.
\end{remark}

Specializing Definition~\ref{def.3.3} to the $p$-adic quaternion algebra $\HH_p$, one is naturally led to the map
\begin{equation}
\overline{(\hspace{0.12mm}\cdot\hspace{0.12mm})}\colon \HH_p\ni\upxi= q_0+q_1\ii+q_2\jj+q_3\kk \mapsto \overline{\upxi}=q_0-q_1\ii-q_2\jj-q_3\kk\in\HH_p,
\end{equation}
which provides the (necessarily unique) standard involution over $\HH_p$. Once a standard involution is defined, we can further introduce two other important notions, namely, that of the \emph{reduced norm} and \emph{reduced trace} on $\HH_p$ (these maps will play an important role when relating $p$-adic quaternions with rotations).
\begin{definition}
Let $p>2$ be a prime number. The \emph{reduced norm} $\Nrd$ on $\HH_p$, is the map defined, for every quaternion $\upxi=q_0+q_1\ii+q_2\jj+q_3\kk$, as
\begin{equation}\label{eq.16}
\HH_p\ni\upxi\mapsto\Nrd(\upxi)\coloneqq\overline{\upxi}\upxi= q_0^2-vq_1^2+pq_2^2-vpq_3^2\in\QQ_p. 
\end{equation}
The \emph{reduced trace}, $\Trd$, on $\HH_p$ is defined as the map
\begin{equation}\label{eq.17}
\HH_p\ni\upxi\mapsto\Trd(\upxi)\coloneqq \upxi+\overline{\upxi} = 2q_0 \in\QQ_p.
\end{equation}
\end{definition}
From Eq.\eqref{eq.17} it is clear that the reduced trace is a $\QQ_p$-linear map, i.e., $\Trd(\alpha\upxi+\zeta)=\alpha\Trd(\upxi)+\Trd(\zeta)$, for every $\alpha\in\QQ_p$ and $\upxi,\zeta\in\HH_p$. It can be also easily checked that $\Nrd$ is a multiplicative map, namely, $\Nrd(\upxi\zeta)=\Nrd(\upxi)\Nrd(\zeta)$, $\upxi,\zeta\in\HH_p$. Moreover, from Eq.\eqref{eq.16}, we see that $\Nrd(\upxi)=Q_+^{(4)}(q_0,q_1,q_2,q_3)$, where $Q_+^{(4)}(\bm{x})$ is the quadratic form on $\QQ_p^4$ as in Eq.\eqref{eq:quadrform3}. Since this quadratic form is non-isotropic, we have that $\Nrd(\upxi)\neq 0$ for every $\upxi\in\HH_p^\times$; this entails that every $\upxi\in\HH_p^\times$ admits a two-sided inverse (i.e., $\upxi$ is a unit in $\HH_p$, see Lemma~$3.3.6$ in~\cite{quat1}), given by
\begin{equation}
\upxi^{-1}= \frac{1}{\Nrd(\upxi)}\overline{\upxi}.
\end{equation}
\begin{remark}
We have seen that the reduced norm of $\HH_p$ corresponds to the unique non-isotropic rank-$4$ quadratic form on $\QQ_p$. Thus, $\HH_p$ is indeed a division algebra.
By Theorem 12.3.2 in \cite{quat1}, when $\mathbb{F}\neq\mathbb{C}$ is a local field, there is a unique division quaternion algebra over $\mathbb{F}$ up to isomorphisms (by Theorem 2.5 in \cite{quat2}, two quaternion algebras over $\mathbb{F}$ are isomorphic if and only if their associated quadratic forms are equivalent). Therefore $\HH_p$ is the unique (division) quaternion algebra over $\QQ_p$ up to isomorphisms.
\end{remark}
Using the reduced trace and the reduced norm, we can single out two important subsets in $\HH_p$. In particular, we denote by $\HH_p^0$ the set defined as
\begin{equation}
    \HH_p^0\coloneqq \{\upxi\in\HH_p\mid\Trd(\upxi)=0\},
\end{equation}
namely the subset of quaternions in $\HH_p$ with zero trace, while we denote with
\begin{equation}
\HH_p^1\coloneqq\{\upxi\in\HH_p^\times\mid \Nrd(\upxi)=1\},
\end{equation}
the subgroup of quaternions in $\HH_p^\times$ with reduced norm equal to $1$. One can easily check that $\mathbb{H}_p^1$ is a normal subgroup of the multiplicative group $\mathbb{H}_p^\times$ of non-zero quaternions, while $\HH_p^0\simeq\{0\}\times\QQ_p^3\simeq\QQ_p^3$ is a subspace of $\HH_p$ consisting of \emph{purely imaginary quaternions} (i.e., of those quaternions $\upxi$ with $q_0=0$). We will shed further light on these subspaces in the next section, devoted to the relations between $p$-adic quaternions and rotations in $\QQ_p^3$.

\section{Relating $p$-adic quaternions and $p$-adic rotations}\label{sec.p-adic_quat_rot}
We now expose the relation between the $p$-adic quaternions $\HH_p$ and $\SO(3)_p$.
First, the group $\HH_p^\times$ acts on $\HH_p$ by \emph{conjugation}, i.e. 
\beq \label{eq:conjACT}
\HH_p\ni \upeta\mapsto  \upxi\upeta\upxi^{-1}\in\HH_p,
\eeq 
for all $\upeta\in\HH_p$, $\upxi\in \HH_p^\times$. This map is an \emph{isometric linear transformation} of $\HH_p$ since it preserves the reduced norm of every $\upeta\in \HH_p$. Moreover, the operation $\upeta\mapsto\upxi\upeta\upxi^{-1}$ leaves the centre $\QQ_p$ of $\HH_p$ pointwise fixed and hence also leaves the orthogonal subspace $\QQ_p^3$ invariant. Let $\kappa_p(\upxi)$ denote the restriction of the conjugation action~\eqref{eq:conjACT} to the subset $\HH_p^0$, i.e.,
\beq \label{eq:kappapxidef}
\HH_p^\times\circlearrowright \HH_p^0\rightarrow \HH_p^0,\qquad \kappa_p(\upxi)(\upeta)\coloneqq \upxi\upeta\upxi^{-1},
\eeq 
for every $\upeta\in\HH_p^0$ and $\upxi\in\HH_p^\times$. Since the action~\eqref{eq:kappapxidef} is an isometric transformation of $\HH_p$, $\kappa_p(\upxi)$ preserves the restriction of $\mathrm{nrd}\sim Q_+^{(4)}$ to $\mathbb{H}_p^0\simeq \mathbb{Q}_p^3$ --- which is $\left.\mathrm{nrd}\right\rvert_{\mathbb{H}_p^0}\sim Q_+$--- i.e., $\kappa_p(\upxi)$ is an orthogonal transformation in $\mathrm{O}(\left.\rn\right\rvert_{\HH_p^0})$. Furthermore, one can write $\kappa_p(\upxi)$ as a $3\times 3$ matrix with respect to the basis $(\mathbf{i},\mathbf{j},\mathbf{k})$ of $\QQ_p^3$. We explicitly derive the action of an invertible quaternion on a pure imaginary quaternion: if $\upxi=q_0+\mathbf{i} q_1+\mathbf{j} q_2+\mathbf{k} q_3\in\HH_p^\times$ and $\upeta=\mathbf{i} s_1+\mathbf{j} s_2+\mathbf{k} s_3\in\HH_p^0$, 
\beq\label{kpxi}
\kappa_p(\upxi)(\upeta)=(q_0+\mathbf{i} q_1+\mathbf{j} q_2+\mathbf{k} q_3)(\mathbf{i} s_1+\mathbf{j} s_2+\mathbf{k} s_3)\frac{q_0-\mathbf{i} q_1-\mathbf{j} q_2-\mathbf{k} q_3}{\rn(\upxi)}.
\eeq
Expanding this product, the scalar part vanishes as expected, and, by collecting the terms in $\mathbf{i},\mathbf{j}$ and $\mathbf{k}$, 
we get the following matrix representation $K_p(\upxi)$ of the map $\kappa_p(\upxi)(\upeta)$
{\small\beq\label{eq:matrkxi}
K_p(\upxi)=\frac{1}{\rn(\upxi)}
\begin{pmatrix}
q_0^2-vq_1^2-pq_2^2+vpq_3^2 & 2p(q_1q_2-q_0q_3) & 2p(q_0q_2-vq_1q_3)\\
-2v(q_0q_3+q_1q_2) & q_0^2+vq_1^2+pq_2^2+vp q_3^2 & 2v(q_0q_1-pq_2q_3)\\
-2(q_0q_2+vq_1q_3) & 2(q_0q_1+pq_2q_3) & q_0^2+vq_1^2-pq_2^2-vpq_3^2
\end{pmatrix},
\eeq}for $p>2$, with respect to the basis $(\mathbf{i},\mathbf{j},\mathbf{k})$ of $\QQ_p^3$. From this matrix representation, it is easy to see that $\det K_p(\upxi)=1$ and hence $\kappa_p(\upxi)\in\SO(\left.\rn\right\rvert_{\HH_p^0})$. Therefore, from the action \eqref{kpxi}, this defines a map $\kappa_p$ that maps elements in $\HH_p^{\times}$ to rotations in $\SO(\left.\rn\right\rvert_{\HH_p^0}),$ i.e.,
\beq 
\kappa_p\colon \HH_p^\times \rightarrow \SO(\left.\rn\right\rvert_{\HH_p^0}).
\eeq 
Moreover, since the quadratic form $Q_+$ defining $\SO(3)_p$ is equivalent to $\left.\rn\right\rvert_{\HH_p^0},$ we have
\begin{equation}\label{isoSOgroups}
\SO(\left.\rn\right\rvert_{\HH_p^0})\simeq \SO(3)_p,
\end{equation} 
where $\left.\rn\right\rvert_{\HH_p^0}(\upxi)=-vq_1^2+pq_2^2-vpq_3^2$.

From Proposition 4.5.10 in \cite{quat1} and Theorem 3.1 p. 63 in \cite{quat2}, the action \eqref{eq:kappapxidef} induces a short exact sequence of groups:
\beq\label{eq:shortexact}
1\rightarrow \QQ_p^\times \hookrightarrow \HH_p^\times\ {\stackrel{\textup{\upshape $\kappa_p$}}{\twoheadrightarrow}}\ \SO(\left.\rn\right\rvert_{\HH_p^0})\rightarrow 1,
\eeq
and, from Theorem 4.14 and Proposition 4.15 in \cite{our3rd}, this entails that the following is an isomorphism of topological groups:
\beq \label{eq:isoSO3Hpsc}
\begin{split}
    \uppsi_p\colon \HH_p^\times/\QQ_p^\times&\rightarrow \SO(\left.\rn\right\rvert_{\HH_p^0})\simeq \SO(3)_p\\ 
    \upxi\QQ_p^\times&\mapsto \kappa_p(\upxi).
\end{split}
\eeq 

Note that $\SO(3)_p\simeq \mathbb{H}_p^\times / \mathbb{Q}_p^\times$ is formally different from $\SO(3)_\RR\simeq \mathbb{H}^1/\{\pm1\}$ over $\mathbb{R}$. Given the isomorphism \eqref{eq:isoSO3Hpsc}, we want to find a relation between quadruples $(q_0,q_1,q_2,q_3)\in\QQ_p^4/\QQ_p^\times$ of quaternion coefficients modulo scalars and nautical angles $(\alpha,\beta,\gamma)\in (\QQ_p\cup\{\infty\})^3$.

\medskip

According to the nautical angle decompositions of $\SO(3)_p$ in Eq.\eqref{eq:Cardanodecos}, for every prime $p>2$, every rotation can be written as a product of rotations around the axes with respect to the canonical basis of $\QQ_p^3$ as follows:
\begin{align}
&\SO(3)_p\ni \cR=\cR_z(\alpha)\cR_y(\beta)\cR_x(\gamma)\nonumber\\
& = \begin{pmatrix}
\frac{1+v\alpha^2}{1-v\alpha^2} &
\frac{2v\alpha}{1-v\alpha^2}&0\\
\frac{2\alpha}{1-v\alpha^2} & \frac{1+v\alpha^2}{1-v\alpha^2} &0\\0&0&1
\end{pmatrix}
\begin{pmatrix}
\frac{1-p\beta^2}{1+p\beta^2} & 0 & -\frac{2p\beta}{1+p\beta^2}\\
0&1&0\\
\frac{2\beta}{1+p\beta^2} & 0&\frac{1-p\beta^2}{1+p\beta^2}
\end{pmatrix}
\begin{pmatrix}
    1&0&0\\
0&\frac{1+\frac{p}{v}\gamma^2}{1-\frac{p}{v}\gamma^2} &
\frac{2\frac{p}{v}\gamma}{1-\frac{p}{v}\gamma^2}\\
0&\frac{2\gamma}{1-\frac{p}{v}\gamma^2} & \frac{1+\frac{p}{v}\gamma^2}{1-\frac{p}{v}\gamma^2}
\end{pmatrix}= 
\begin{pmatrix}
\cR_{11}&\cR_{12}&\cR_{13}\\\cR_{21}&\cR_{22}&\cR_{23}\\\cR_{31}&\cR_{32}&\cR_{33}
\end{pmatrix},
\label{cardanorotaz}
\end{align}
where
\begin{align}
    &\cR_{11}=\frac{1 + v \alpha^2}{(1 - v \alpha^2)} \frac{1 - p \beta^2}{1 + p \beta^2},\\
    &\cR_{12}=\frac{2 v \alpha (1 + \frac{p}{v}\gamma^2)}{(1 - v \alpha^2)(1 - \frac{p}{v}\gamma^2)} -\frac{4 p (1 + v \alpha^2) \beta \gamma}{(1 - v \alpha^2)(1 + p \beta^2)(1 - \frac{p}{v}\gamma^2)},\\
    &\cR_{13}=\frac{4 p \alpha \gamma}{(1 - v \alpha^2)(1 - \frac{p}{v}\gamma^2)} - \frac{2 p (1 + v \alpha^2) \beta (1 + \frac{p}{v}\gamma^2)}{(1 - v \alpha^2)(1 + p \beta^2)(1 - \frac{p}{v}\gamma^2)},\\
    &\cR_{21}=\frac{2 \alpha (1 - p \beta^2)}{(1 - v \alpha^2)(1 + p \beta^2)}, \\
    &\cR_{22}=\frac{1 + v \alpha^2}{1 - v \alpha^2}\frac{1 + \frac{p}{v}\gamma^2}{1 - \frac{p}{v}\gamma^2} -\frac{8 p \alpha \beta \gamma}{(1 - v \alpha^2)(1 + p \beta^2)(1 - \frac{p}{v}\gamma^2)},\\
    &\cR_{23}=\frac{2 p (1 + v \alpha^2) \gamma}{v (1 - v \alpha^2)(1 - \frac{p}{v}\gamma^2)} - \frac{4 p \alpha \beta (1 + \frac{p}{v}\gamma^2)}{(1 - v \alpha^2)(1 + p \beta^2)(1 - \frac{p}{v}\gamma^2)},\\
    &\cR_{31}=\frac{2 \beta}{1 + p \beta^2} ,\\
    &\cR_{32}=\frac{2 (1 - p \beta^2) \gamma}{(1 + p \beta^2)(1 - \frac{p}{v}\gamma^2)}, \\
    &\cR_{33}=\frac{1 - p \beta^2}{1 + p \beta^2} \frac{1 + \frac{p}{v}\gamma^2}{1 - \frac{p}{v}\gamma^2}.
\end{align}

Consider now $\kappa_p(\upxi)\in\SO(\left.\rn\right\rvert_{\HH_p^0})$ in its matrix form $K_p(\upxi)$ as in Eq.\eqref{eq:matrkxi}. Let $\left.A_+^{(4)}\right\rvert_{\HH_p^0}\coloneqq\diag(-v,p,-vp)$ be the matrix representation of the quadratic form $\left.\rn\right\rvert_{\HH_p^0}$ w.r.t. the quaternionic basis $\mathbf{i},\mathbf{j},\mathbf{k}$. Then we have that
\beq
K_p(\upxi)^\top\left.A_+^{(4)}\right\rvert_{\HH_p^0}K_p(\upxi)=\left.A_+^{(4)}\right\rvert_{\HH_p^0}, 
\eeq
whereas for $A_+$
\beq
K_p(\upxi)^\top A_+K_p(\upxi)\neq A_+,
\eeq 
which means that $K_p(\upxi)\in \SO(A_+^{(4)}|_{\mathbb{H}_p^0})$ and $K_p(\upxi)\not\in \SO(3)_p$. Since $\SO(A_+^{(4)}|_{\mathbb{H}_p^0})$ and $\SO(3)_p$ are isomorphic due to the equivalence of their respective quadratic forms, we have to find a change of basis in order to obtain the right matrix expression of $\kappa_p(\upxi)$ living in $\SO(3)_p$, to be equated to Eq.\eqref{cardanorotaz}.

Let us therefore deepen this equivalence of quadratic forms, to find an explicit isomorphism for Eq.\eqref{isoSOgroups}. Actually we have the following

\begin{theorem}\label{finaliso}
For every prime $p>2$, the isomorphism of topological groups between $\SO(3)_p$ and $\HH_p^\times/\QQ_p^\times$ in Eq.\eqref{eq:isoSO3Hpsc} is
\beq \label{totalcorisoQpHpSO3}
\mathcal{T}_p\colon \HH_p^\times/\QQ_p^\times\rightarrow  \SO(3)_p, \quad
\upxi\QQ_p^\times\mapsto\Lambda K_p(\upxi)\Lambda^{-1},
\eeq 
where $\Lambda\coloneqq \begin{pmatrix} 0&0&-vp\\0&p&0\\-v&0&0 \end{pmatrix}$ and 
\beq \label{eq:matrkxi2}
\Lambda K_p(\upxi)\Lambda^{-1}=\frac{1}{\rn(\upxi)}
\begin{pmatrix}
q_0^2+vq_1^2-pq_2^2-vpq_3^2 & -2v(q_0q_1+pq_2q_3) & -2p(q_0q_2+vq_1q_3)\\
2(pq_2q_3-q_0q_1) & q_0^2+vq_1^2+pq_2^2+vp q_3^2 & 2p(q_1q_2+q_0q_3)\\
2(q_0q_2-vq_1q_3) & 2v(q_0q_3-q_1q_2) & q_0^2-vq_1^2-pq_2^2+vpq_3^2
\end{pmatrix},
\end{equation}
for all $\upxi\in\HH_p^\times$.
\end{theorem}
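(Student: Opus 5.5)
The plan is to show that $\Lambda$ intertwines the two quadratic forms, and then transport the isomorphism $\uppsi_p$ of Eq.~\eqref{eq:isoSO3Hpsc} along conjugation by $\Lambda$.

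\textbf{Step 1: $\Lambda$ intertwines the forms.} Write $B\coloneqq \left.A_+^{(4)}\right\rvert_{\HH_p^0}=\diag(-v,p,-vp)$. I will look for a scalar $c\in\QQ_p^\times$ with
\[
\Lambda^{-\top}B\Lambda^{-1}=c\,A_+, \qquad\text{equivalently}\qquad \Lambda^\top A_+\Lambda = c^{-1}B .
\]
$\Lambda$ is a signed permutation times a diagonal matrix, so this is a direct computation. Column $1$ of $\Lambda$ is $-v\mathbf{e}_3$, column $2$ is $p\mathbf{e}_2$, and column $3$ is $-vp\,\mathbf{e}_1$. Hence
\[
\Lambda^\top A_+\Lambda=\diag\bigl(v^2p,\,-vp^2,\,v^2p^2\bigr)=-vp\,\diag(-v,p,-vp)=-vp\,B .
\]
A scalar multiple of the form does not change the orthogonal group. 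So whenever $K^\top BK=B$, setting $L=\Lambda K\Lambda^{-1}$ gives
\[
L^\top A_+L=\Lambda^{-\top}K^\top(\Lambda^\top A_+\Lambda)K\Lambda^{-1}=-vp\,\Lambda^{-\top}K^\top BK\Lambda^{-1}=A_+ .
\]
Moreover $\det L=\det K=1$.

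\textbf{Step 2: $\mathcal{T}_p$ is an isomorphism of topological groups.} Conjugation $K\mapsto\Lambda K\Lambda^{-1}$ is a group isomorphism $\SO(B)\to\SO(3)_p$; its inverse is conjugation by $\Lambda^{-1}$, by the same computation. It is also a homeomorphism, since it is a polynomial (linear) map in the matrix entries with polynomial inverse. Composing it with $\uppsi_p$ from Eq.~\eqref{eq:isoSO3Hpsc}, which uses the short exact sequence~\eqref{eq:shortexact} and the cited results of \cite{our3rd}, shows that $\mathcal{T}_p$ is a well-defined isomorphism of topological groups. Well-definedness on cosets is automatic, because $K_p(\lambda\upxi)=K_p(\upxi)$ for $\lambda\in\QQ_p^\times$: the entries of Eq.~\eqref{eq:matrkxi} are homogeneous of degree $2$ divided by $\Nrd$, which is also homogeneous of degree $2$.

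\textbf{Step 3: the explicit matrix~\eqref{eq:matrkxi2}.} Compute
\[
\Lambda^{-1}=\begin{pmatrix}0&0&-1/v\\0&1/p&0\\-1/(vp)&0&0\end{pmatrix}.
\]
Then $(\Lambda K\Lambda^{-1})_{ij}=\lambda_i\,K_{\sigma(i)\sigma(j)}/\lambda_j$, where $\sigma$ swaps the indices $1$ and $3$, $\lambda_1=-vp$, $\lambda_2=p$, $\lambda_3=-v$. So each entry is a single entry of $K_p(\upxi)$ rescaled by a ratio of the $\lambda$'s. For example:
\begin{itemize}
\item $(1,1)$: equals $K_{33}$, which gives the stated $q_0^2+vq_1^2-pq_2^2-vpq_3^2$.
\item $(1,2)$: equals $\tfrac{-vp}{p}K_{32}=-2v(q_0q_1+pq_2q_3)$, matching.
\item $(1,3)$: equals $pK_{31}=-2p(q_0q_2+vq_1q_3)$.
\item $(3,1)$: equals $K_{13}/p=2(q_0q_2-vq_1q_3)$.
\end{itemize}
The remaining entries follow the same pattern.

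\textbf{Main obstacle.} The only real risk is sign and scaling bookkeeping in Steps 1 and 3, in particular checking that the proportionality constant $-vp$ is harmless and that each of the nine rescaled entries matches~\eqref{eq:matrkxi2} exactly. I would verify all nine entries systematically using the $\lambda_i/\lambda_j$ rule. If any entry disagrees in sign, the fix is to adjust the sign pattern in $\Lambda$, which does not affect the conjugation argument.
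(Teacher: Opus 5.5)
Your proposal is correct and follows essentially the same route as the paper. Both arguments verify $\Lambda^\top A_+\Lambda=-vp\,\diag(-v,p,-vp)$, so that conjugation by $\Lambda$ carries $\SO\bigl(\left.\rn\right\rvert_{\HH_p^0}\bigr)$ onto $\SO(3)_p$; both then compose with $\uppsi_p$ and compute $\Lambda K_p(\upxi)\Lambda^{-1}$ explicitly. Your entrywise rule $(\Lambda K\Lambda^{-1})_{ij}=\lambda_iK_{\sigma(i)\sigma(j)}/\lambda_j$ does reproduce all nine entries of Eq.~\eqref{eq:matrkxi2} with the given $\Lambda$, so the fallback of altering the signs of $\Lambda$ is never needed; it would not be legitimate anyway, since the statement fixes $\Lambda$.
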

\begin{proof}
    First, we observe that 
\beq 
-vp\left.A_+^{(4)}\right\rvert_{\HH_p^0} \eqqcolon A_s = \diag\big((-v)^2p,-vp^2,(-vp)^2\big),
\eeq 
which yields $\SO\big(\left.A_+^{(4)}\right\rvert_{\HH_p^0}\big)=\SO(A_s)$ since, for any $L\in \mM(3,\QQ_p)$, $L^\top A_s L = A_s$ if and only if $L^\top \left.A_+^{(4)}\right\rvert_{\HH_p^0} L =\left.A_+^{(4)}\right\rvert_{\HH_p^0}$. Moreover, the square factors of the coefficients of a quadratic form can be absorbed in $\QQ_p^{\times^2}$ and, swapping the first component with the third, we get
\beq
A_s\mapsto\diag(p,-v,1)\mapsto A_+.
\eeq 
This is indeed a similarity of quadratic forms leading to isomorphic (special) orthogonal groups that are realized by the matrix $\Lambda$ (related to a change of basis of $\QQ_p^3$) since $\Lambda^\top A_+\Lambda= A_s.$
Therefore, we have the following isomorphism of (topological) groups using the conjugation matrix $\Lambda$:
\beq 
\SO(3)_p\ni\cR\ {\stackrel{\textup{\upshape $1\hspace{-0,1cm}:\hspace{-0,1cm}1$}}{\longleftrightarrow}}\ \Lambda^{-1}\cR\Lambda\in \SO(A_s)=\SO\big(\left.A_+^{(4)}\right\rvert_{\HH_p^0}\big),
\eeq 
which, after basic matrix calculations, yields the result.
\end{proof}

\begin{corollary}\label{q0solutions}
Equating \eqref{cardanorotaz} and \eqref{eq:matrkxi2}, we obtain the following solutions depending on the free parameter $q_0\in\QQ_p^\times$:
\beq \label{mainrelparq0}
q_1 = \frac{\frac{p}{v}\beta\gamma-\alpha}{1-p\alpha\beta\gamma}q_0,\qquad q_2=\frac{\beta-\alpha\gamma}{1-p\alpha\beta\gamma}q_0,\qquad q_3=\frac{\frac{\gamma}{v}-\alpha\beta}{1-p\alpha\beta\gamma}q_0.
\eeq 
\end{corollary}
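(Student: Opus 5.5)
Rather than equating the nine entries of \eqref{cardanorotaz} and \eqref{eq:matrkxi2} directly, which gives a messy system of quadratic equations in $q_0,\dots,q_3$, I will use that $\mathcal{T}_p$ of Theorem~\ref{finaliso} is a group isomorphism. It then suffices to find one quaternion representing each of $\cR_z(\alpha)$, $\cR_y(\beta)$ and $\cR_x(\gamma)$. Their product in $\HH_p^\times$ represents $\cR_z(\alpha)\cR_y(\beta)\cR_x(\gamma)$, and by injectivity of $\mathcal{T}_p$ on $\HH_p^\times/\QQ_p^\times$ every solution is a scalar multiple of it. That scalar multiple is exactly the free parameter $q_0$ in \eqref{mainrelparq0}.

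\emph{Step 1 (one-axis rotations).} I specialize \eqref{eq:matrkxi2} to quaternions with only two nonzero coordinates and compare with the three factor matrices in \eqref{cardanorotaz}.
\begin{itemize}
\item For $\upxi=q_0+q_1\ii$ the matrix is block-diagonal, with $(1,1)$ entry $(q_0^2+vq_1^2)/(q_0^2-vq_1^2)$ and off-diagonal entries $-2vq_0q_1/\rn(\upxi)$ and $-2q_0q_1/\rn(\upxi)$. Matching with $\cR_z(\alpha)$ gives $q_1/q_0=-\alpha$, so $\upxi_z=1-\alpha\ii$.
\item For $\upxi=q_0+q_2\jj$, the $(1,3)$ and $(3,1)$ entries give $q_2/q_0=\beta$, so $\upxi_y=1+\beta\jj$.
\item For $\upxi=q_0+q_3\kk$ we have $\rn(\upxi)=q_0^2-vpq_3^2$, and matching the $(2,3)$ and $(3,2)$ entries with those of $\cR_x(\gamma)$ gives $q_3/q_0=\gamma/v$, so $\upxi_x=1+\frac{\gamma}{v}\kk$.
\end{itemize}
In each case I will also check the diagonal entries, which follow from the same substitution.

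\emph{Step 2 (multiply).} I compute $\upxi_z\upxi_y\upxi_x=(1-\alpha\ii)(1+\beta\jj)(1+\tfrac{\gamma}{v}\kk)$ using the rules \eqref{eq.quatalgodd}. The products needed are $\kk^2=vp$, $\jj\kk=p\,\ii$ and $\ii\kk=v\,\jj$. I expect
\[
\upxi_z\upxi_y\upxi_x=(1-p\alpha\beta\gamma)+\Big(\tfrac{p}{v}\beta\gamma-\alpha\Big)\ii+(\beta-\alpha\gamma)\jj+\Big(\tfrac{\gamma}{v}-\alpha\beta\Big)\kk .
\]
Rescaling by $q_0/(1-p\alpha\beta\gamma)$ then gives \eqref{mainrelparq0} exactly.

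\emph{Main obstacle.} The difficulty is in the degenerate cases, not the algebra. The formula requires $1-p\alpha\beta\gamma\neq0$; when it vanishes, the real part of the representative is $0$, so no representative has $q_0\neq0$. Angles equal to $\infty$ must also be handled, using $\cR_d(\infty)=-I_2$; these correspond to the pure quaternions $\ii$, $\jj$, $\kk$. I will treat both situations either by taking the limit of the projective class $[\,1-p\alpha\beta\gamma:\dots\,]$ or by stating the corollary for the generic locus $1-p\alpha\beta\gamma\neq0$, with all angles finite. That locus is all that is needed for the subsequent Haar-measure computation, since its complement is Haar-null.
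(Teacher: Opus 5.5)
Your proposal is correct, and it reaches \eqref{mainrelparq0} by a different route from the paper's. The paper simply equates the nine entries of \eqref{cardanorotaz} with those of \eqref{eq:matrkxi2} and solves the resulting quadratic system for $q_1,q_2,q_3$ in terms of $q_0$. The one-axis formulas \eqref{convform3}--\eqref{convform4} and the homogeneous vector $[1-p\alpha\beta\gamma:\frac{p}{v}\beta\gamma-\alpha:\beta-\alpha\gamma:\frac{\gamma}{v}-\alpha\beta]$ appear only afterwards, in Remark~\ref{rem:projectiverel} and the list that follows it.

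You reverse this order. You start from the one-axis representatives $1-\alpha\ii$, $1+\beta\jj$, $1+\frac{\gamma}{v}\kk$, which agree with the paper's, and then use that $\mathcal{T}_p$ is multiplicative. That step is legitimate, since \eqref{eq:matrkxi2} is $\Lambda K_p(\upxi)\Lambda^{-1}$ with $K_p$ the matrix of the conjugation action. Your products $\kk^2=vp$, $\jj\kk=p\,\ii$, $\ii\kk=v\,\jj$ are right, and expanding $(1-\alpha\ii)(1+\beta\jj)(1+\frac{\gamma}{v}\kk)$ gives exactly that homogeneous vector.

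Your route buys several things:
\begin{itemize}
\item There is no nonlinear system to solve, and no dependence on the expanded entries $\cR_{ij}$.
\item The fact that $q_0$ is the only free parameter follows from $\ker\kappa_p=\QQ_p^\times$ rather than from solving.
\item The non-vanishing of the projective vector, which the paper checks by hand in Remark~\ref{rem:projectiverel}, is automatic, because a product of nonzero elements of the division algebra $\HH_p$ is nonzero.
\item Multiplicativity of $\rn$ gives at once $\rn(\upxi_z\upxi_y\upxi_x)=(1-v\alpha^2)(1+p\beta^2)(1-\frac{p}{v}\gamma^2)$. Divided by $(1-p\alpha\beta\gamma)^2$, this is the factorization of $Q_+^{(4)}(\mathbf{q})$ that the paper computes separately in the proof of Theorem~\ref{finalhaarmeasuret}.
\end{itemize}
The paper's entrywise comparison is a self-contained check against the explicit matrix, but it exposes none of this structure. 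Your treatment of the degenerate locus (vanishing real part, or an angle at $\infty$, corresponding to $\ii,\jj,\kk$) matches the paper's Remark. It is enough for the Haar-measure application, since that locus is Haar-null.
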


\begin{remark}\label{rem:projectiverel}

In Eq.\eqref{mainrelparq0} there is an apparent singularity when $1 - p\alpha\beta\gamma = 0$. Since isomorphism \eqref{totalcorisoQpHpSO3} is defined on the quotient group $\HH_p^\times/\QQ_p^\times$, a non-zero quaternion $\upxi=q_0+\mathbf{i} q_1+\mathbf{j}q_2+\mathbf{k}q_3$ determines a rotation up to a non-zero scalar multiple. Thus, the natural parameter space for these rotations is the projective space $\mathrm{P}^3(\QQ_p)$, equipped with homogeneous coordinates $[q_0 : q_1 : q_2 : q_3]$. By deriving Eq.\eqref{mainrelparq0} and expressing $q_1, q_2$, $q_3$ in terms of $q_0$, we are implicitly restricting our analysis to the affine chart defined by $q_0 \neq 0$. By rewriting Eq.\eqref{mainrelparq0} in terms of projective coordinates, we obtain the global parametrization \begin{equation}\notag [q_0 : q_1 : q_2 : q_3] = \left[1-p\alpha\beta\gamma:\frac{p}{v}\beta\gamma-\alpha:\beta-\alpha\gamma:\frac{\gamma}{v}-\alpha\beta\right]. \end{equation} Indeed, this is globally well-defined and when restricted to $1-p\alpha\beta\gamma=0$ is not the trivial class $[0:0:0:0]$. To see this, from the first component equal to zero we get $\beta\gamma=\frac{1}{p\alpha}$ ($\alpha\neq0$ otherwise $1-p\alpha\beta\gamma=0\Rightarrow 1=0$), and from the second one equal to zero we get $\beta\gamma=\frac{v\alpha}{p}$; this implies $\frac{1}{p\alpha}=\frac{v\alpha}{p}$, i.e., $v=\frac{1}{\alpha^2}$, which is a contradiction for every $\alpha\neq0$ since $v$ is not a square. From the point of view of projective geometry, the condition $1 - p\alpha\beta\gamma = 0$ does not represent a mathematical singularity, but rather the vanishing of the scalar component $q_0 = 0$. A quaternion with zero scalar component is a pure imaginary quaternion in $\HH_p^0$. One can show (see e.g. Section 4.5 in \cite{quat1}) that $\upxi\in \HH_p^0$ if and only if the corresponding matrix %\Lambda K_p(\upxi)\Lambda^{-1}
in $\SO(3)_p$ is an involution (i.e., a rotation $R\neq \mathrm{I}_3$ such that $R^2=\mathrm{I}_3$), corresponding to a rotation of parameter $\infty$ around the axis associated to $\upxi$. Therefore, $\mathbb{H}_p^0$ corresponds to $D\coloneqq\{(\alpha,\beta,\gamma)\in (\QQ_p\cup\{\infty\})^3\ \midd\ 1-p\alpha\beta\gamma=0\}$ which, in turn, corresponds to involutions.
Moreover, $D$ has zero Haar measure since is an algebraic manifold in $\QQ_p^3$ with codimension equal to $1$.
\end{remark}

We could have found relations between quaternion components and nautical angles in terms of the other free parameters $q_1,q_2,q_3$, but the choice of parameter $q_0$ remains the preferred one due to the fact that Eq.\eqref{mainrelparq0} encompasses the rotations around the reference axes, since they are such that $1-p\alpha\beta\gamma=1\neq0$. Indeed, we can find the formulas for the rotations around the reference axes equating rotations $\cR_z(\alpha),\cR_y(\beta),\cR_x(\gamma)$ with $\Lambda K_p(\upxi)\Lambda^{-1}$ in \eqref{eq:matrkxi2} which yields the following formulas:
\begin{enumerate}
    \item[$\bullet$] for a rotation $\cR_z(\alpha)$ around the $z$-axis we have
    \beq \label{convform3}
    q_1=-\alpha q_0,\qquad q_2=q_3=0,
    \eeq
    \item[$\bullet$] for a rotation $\cR_y(\beta)$ around the $y$-axis (i.e. $\alpha=\gamma=0$) we have
    \beq \label{convform2}
    q_2=\beta q_0,\qquad q_1=q_3=0,
    \eeq 
    \item[$\bullet$] for a rotation $\cR_x(\gamma)$ around the $x$-axis (i.e. $\alpha=\beta=0$) we have
    \beq\label{convform4}
    q_3=\frac{\gamma}{v}q_0,\qquad q_1=q_2=0.
    \eeq
\end{enumerate}

\section{Haar measure}\label{sec:Haarfoundations}

We now recall the Haar integrals on the bidimensional and tridimensional rotation groups $\SO(2)_{p,d}$ and $\SO(3)_{p}.$ For comprehensive and explicit construction of the left (and right) Haar measure $\mu$ on a (second countable) $p$-adic Lie group $G$ we refer to Section 3 in \cite{our3rd}. 
According to the parametrization for $\SO(2)_{p,d}$ present in Theorem 2.38 in \cite{our3rd} we have that the group is homeomorphic to the $p$-adic projective line and it can be covered by two distinct charts. Applying the general formula for the Haar measure on $p$-adic Lie groups (see e.g. Theorem 3.2 in \cite{our3rd}) on $\SO(2)_{p,d}$ yields the following Haar measure:
\beq\label{so2measure}
\mu_{\SO(2)_{p,d}}(E)=\int_{\varphi_d(E)}\frac{1}{|{1+d\sigma^2}|_{p}} \ d\lambda(\sigma),
\eeq
where $\varphi_d(E)$ is a coordinate map with $\varphi_d(I)=0,$ for every Borel subset $E$ of $\SO(2)_{p,d}$ and where $\lambda$ is the Haar measure on $\QQ_p$. Note that the factors at the denominators are coming with $d\in\{-v,p,-p/v\}$, such that $-d$ is not a square, hence $1+d\sigma^2\neq0$ for all $\sigma\in\mathbb{Q}_p$. From Theorem~5.40 in $\cite{thesis}$ we have the following normalization factors:
    \beq\label{so2normfact}
    \begin{split}
        &\mu_{\SO(2)_{p,-v}}(\SO(2)_{p,-v})=1+\frac{1}{p},\\
        &\mu_{\SO(2)_{p,p}}(\SO(2)_{p,p})=2,\\
        &\mu_{\SO(2)_{p,-p/v}}(\SO(2)_{p,-p/v})=2.
    \end{split}
    \eeq

The construction we will use for the Haar measure on the compact $p$-adic Lie group $\SO(3)_p$ relies on Theorem~\ref{finaliso} and Corollary~\ref{q0solutions}. 
To begin with, the group $\HH_p^\times$ admits a left Haar measure, as it is a locally compact group; moreover, the group $\HH_p^\times$ is unimodular for every prime, $p$ which means that the left and right Haar measures coincide and can be constructed by exploiting directly the general formula present in Theorem 3.2 in \cite{our3rd} therefore obtaining:
\beq \label{Hmeasure}
\mu_{\HH_p^\times}(E)=\int_{\varphi(E)}\frac{1}{\big\lvert Q_+^{(4)}(\mathbf{q})\big\rvert_p^2}\ d\lambda(\mathbf{q}),
\eeq
for every Borel subset $E$ where $\upxi=q_0+\mathbf{i} q_1+\mathbf{j} q_2+\mathbf{k} q_3\in\HH_p^\times$, $\varphi(\upxi)=(q_0,q_1,q_2,q_3)=\mathbf{q},$ $d\lambda(\mathbf{q})=d q_0d q_1d q_2d q_3$ is the Haar measure on $\QQ_p^4$ and $Q_+^{(4)}$ is the non-degenerate non-isotropic quadratic form on $\mathbb{Q}_p^4$ in Eq.\eqref{eq:quadrform3}.

Finally, we are going to state and prove our main result, i.e., the expression of the Haar measure on $\SO(3)_p$ in terms of nautical angles.

\begin{theorem}\label{finalhaarmeasuret}
For every prime $p>2$ the Haar measure $d\mu_{\SO(3)_p}(\cR)$ on $\SO(3)_p$ in terms of nautical angles $\alpha,\beta,\gamma\in \QQ_p\cup \infty$ such that $\mathcal{R}(\alpha,\beta,\gamma)=\mathcal{R}_z(\alpha)\mathcal{R}_y(\beta)\mathcal{R}_x(\gamma)$ is given by 
\beq\label{finalhaarmeasure}
\int_{\mathcal{B}}d\mu_{\SO(3)_p}(\cR(\alpha,\beta,\gamma)) = \left[4\left(1+\frac{1}{p}\right)\right]^{-1}\int_{\mathcal{B}}\frac{d\alpha d\beta d\gamma}{\left\lvert (1-v\alpha^{2})(1+p\beta^{2})(1-\frac{p}{v}\gamma^2) \right\rvert_p},
\eeq 
for every Borel subset $\mathcal{B}\in\mathscr{B}(\SO(3)_p)$\footnote{It should be clear that, with a slight abuse of notation, on the r.h.s.\ of Eq.~\eqref{finalhaarmeasure} we have used the same symbol $\mathcal{B}$ to denote the Borel subset of $\QQ_p^3$ corresponding --- via the nautical angle parametrization of $\SO(3)_p$ --- to the Borel set $\mathcal{B}\in\mathscr{B}(\SO(3)_p)$.}, and
where $v\in \QQ_p$ is a non-quadratic unit as in Eq.\eqref{v}.
\end{theorem}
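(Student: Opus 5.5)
The plan is to descend the Haar measure \eqref{Hmeasure} of $\HH_p^\times$ to the quotient $\HH_p^\times/\QQ_p^\times\simeq\SO(3)_p$, using the isomorphism $\mathcal{T}_p$ of Theorem~\ref{finaliso}. The resulting density is then pulled back to nautical angles through the parametrization of Corollary~\ref{q0solutions}. Sets of Haar measure zero will be discarded throughout. These are the set $D$ of Remark~\ref{rem:projectiverel} ($q_0=0$), the points where some angle equals $\infty$, and any non-injectivity locus of the Cardano decomposition \eqref{eq:Cardanodecos}.

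\textbf{Step 1 (quotient measure).} On the open dense set $q_0\neq0$, write $\upxi=q_0(1+a_1\ii+a_2\jj+a_3\kk)$ with $q_0\in\QQ_p^\times$ and $\mathbf{a}=(a_1,a_2,a_3)\in\QQ_p^3$. The map $(q_0,\mathbf{a})\mapsto\mathbf{q}=q_0(1,\mathbf{a})$ has Jacobian $|q_0|_p^3$. Moreover $Q_+^{(4)}$ is homogeneous of degree two, so $|Q_+^{(4)}(\mathbf{q})|_p^2=|q_0|_p^4\,|N(\mathbf{a})|_p^2$, where $N(\mathbf{a})\coloneqq 1-va_1^2+pa_2^2-vpa_3^2$. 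Substituting into \eqref{Hmeasure} gives
\[
\frac{d\lambda(\mathbf{q})}{|Q_+^{(4)}(\mathbf{q})|_p^2}=\frac{dq_0}{|q_0|_p}\cdot\frac{da_1\,da_2\,da_3}{|N(\mathbf{a})|_p^2}.
\]
The first factor is the Haar measure of $\QQ_p^\times$. By the quotient integral formula (Weil), a Haar measure on $\HH_p^\times/\QQ_p^\times$ is therefore $d\mathbf{a}/|N(\mathbf{a})|_p^2$ in the affine chart $q_0=1$.

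\textbf{Step 2 (change of variables to angles).} By Corollary~\ref{q0solutions}, $\mathbf{a}=F(\alpha,\beta,\gamma)/\Delta$, where $\Delta\coloneqq1-p\alpha\beta\gamma$ and
\[
F=\left(\tfrac{p}{v}\beta\gamma-\alpha,\ \beta-\alpha\gamma,\ \tfrac{\gamma}{v}-\alpha\beta\right).
\]
A cleaner route to the same parametrization uses \eqref{convform3}--\eqref{convform4} and the homomorphism property of $\mathcal{T}_p$. These give that, up to scalars, $\upxi$ equals $(1-\alpha\ii)(1+\beta\jj)(1+\tfrac{\gamma}{v}\kk)$, whose expansion reproduces $[\Delta:F]$. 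Multiplicativity of $\Nrd$ then gives at once
\[
\Delta^2N(\mathbf{a})=\Nrd\big((1-\alpha\ii)(1+\beta\jj)(1+\tfrac{\gamma}{v}\kk)\big)=(1-v\alpha^2)(1+p\beta^2)\big(1-\tfrac{p}{v}\gamma^2\big)\eqqcolon P.
\]
The remaining ingredient, and the step I expect to be the main computational obstacle, is the Jacobian determinant. I conjecture
\[
\det\frac{\partial(a_1,a_2,a_3)}{\partial(\alpha,\beta,\gamma)}=\pm\frac{P}{\Delta^4},
\]
up to a unit constant. I would verify this by direct symbolic expansion, first checking it at $\alpha=\beta=\gamma=0$, where it gives $\pm 1/v$. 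An alternative I would try if the expansion gets messy is to use left and right invariance. Writing $\upxi=\upxi_z(\alpha)\upxi_y(\beta)\upxi_x(\gamma)$, the one-parameter factors reduce the computation to the Jacobians of the $\SO(2)$ parametrizations \eqref{eq:complowSO2p}, each of which contributes one factor of $P$. Granting the Jacobian, the density becomes
\[
\frac{|P|_p}{|\Delta|_p^4}\cdot\frac{|\Delta|_p^4}{|P|_p^2}=\frac{1}{|P|_p},
\]
which is exactly the integrand in \eqref{finalhaarmeasure}.

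\textbf{Step 3 (normalization).} Since the density factorizes and the excluded sets are null, the total mass of $\SO(3)_p$ is
\[
\int_{\QQ_p}\frac{d\alpha}{|1-v\alpha^2|_p}\int_{\QQ_p}\frac{d\beta}{|1+p\beta^2|_p}\int_{\QQ_p}\frac{d\gamma}{|1-\tfrac{p}{v}\gamma^2|_p}.
\]
By \eqref{so2measure}, these three integrals are the total masses of $\SO(2)_{p,-v}$, $\SO(2)_{p,p}$ and $\SO(2)_{p,-p/v}$. By \eqref{so2normfact} their product is $(1+\tfrac1p)\cdot2\cdot2$, which yields the constant $[4(1+1/p)]^{-1}$. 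Finally, I would confirm that the nautical parametrization $\QQ_p^3\to\SO(3)_p$ is injective off a null set, so that pulling back Borel sets is legitimate; this should follow from Theorem~22 in~\cite{our1st}.
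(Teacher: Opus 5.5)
\textbf{Gap: the Jacobian conjecture is false.} Your Steps 1 and 3 coincide with the paper's argument. Your factorization $\upxi\sim(1-\alpha\ii)(1+\beta\jj)(1+\tfrac{\gamma}{v}\kk)$ is correct: using $\ii\kk=v\jj$, $\jj\kk=p\ii$, $\kk^2=vp$ it expands to exactly $[\Delta:F]$. So $\Delta^2N(\mathbf a)=P$ follows neatly from multiplicativity of $\Nrd$.

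The gap is the Jacobian, which you only conjecture. The conjecture is false. At $\alpha=\gamma=0$ all partials of $\Delta$ vanish, and the Jacobian matrix has rows $(-1,0,\tfrac{p}{v}\beta)$, $(0,1,0)$, $(-\beta,0,\tfrac1v)$. Its determinant is $-\tfrac1v(1-p\beta^2)$, whereas $P/\Delta^4=1+p\beta^2$ at that point. In general (cf.\ \eqref{eq:abdetjacangoli})
\[
\det\frac{\partial(a_1,a_2,a_3)}{\partial(\alpha,\beta,\gamma)}=-\frac{(1-v\alpha^2)(1-p\beta^2)(1-\frac{p}{v}\gamma^2)}{v\,\Delta^4},
\]
which differs from $P/\Delta^4$ by the non-constant factor $-\frac1v\cdot\frac{1-p\beta^2}{1+p\beta^2}$.

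Your check at the origin cannot detect this. The invariance route does not give ``one factor of $P$ per $\SO(2)$'' either. Transporting the $z$-generator through $\mathrm{Ad}(\cR_y(\beta)^{-1})$ produces exactly the diagonal entry $\frac{1-p\beta^2}{1+p\beta^2}$ of $\cR_y(\beta)$, the $p$-adic counterpart of the $\cos\beta$ in the real Tait--Bryan density.

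\textbf{The missing lemma.} What is missing is the ultrametric lemma the paper uses to kill this factor. For $\beta\in\QQ_p^\times$, $|p\beta^2|_p$ is an odd power of $p$, hence $\neq1$. So $|1\pm p\beta^2|_p=\max\{1,|p\beta^2|_p\}$ for both signs. Together with $\beta=0$ and $\beta=\infty$ (where the ratio is $-1$), this gives $\bigl|\frac{1-p\beta^2}{1+p\beta^2}\bigr|_p=1$ on $\QQ_p\cup\{\infty\}$. Only with the correct determinant and this lemma does the density become $1/|P|_p$. This lemma is precisely the source of the factorization.

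\textbf{Secondary error: injectivity.} The injectivity you intend to confirm at the end is false. We have
\[
(1-\alpha\ii)(1+\beta\jj)(1+\tfrac{\gamma}{v}\kk)=\bigl[(1-\alpha\ii)\ii\bigr]\bigl[\ii^{-1}(1+\beta\jj)\kk\bigr]\bigl[\kk^{-1}(1+\tfrac{\gamma}{v}\kk)\bigr].
\]
For $\alpha\beta\gamma\neq0$ the three brackets are nonzero scalar multiples of $1-\alpha'\ii$, $1+\beta'\jj$, $1+\frac{\gamma'}{v}\kk$, where $(\alpha',\beta',\gamma')=(\frac{1}{v\alpha},\frac{1}{p\beta},\frac{v}{p\gamma})$. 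Hence both angle triples give the same rotation. The limiting case is $\cR(0,0,0)=\cR(\infty,\infty,\infty)=I_3$, reflecting $\ii\jj\kk=vp$.

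This involution has no fixed points, and the nautical map is $2$-to-$1$ onto $\SO(3)_p$. Consistently, $\int d\alpha\,d\beta\,d\gamma/|P|_p=4(1+\frac1p)$, while your quotient measure $d\mathbf a/|N(\mathbf a)|_p^2$ has total mass $2(1+\frac1p)$.

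This does not spoil the normalized formula, since every mass on the angle side is doubled, including the normalizing total. That holds provided $\mathcal{B}$ on the right of \eqref{finalhaarmeasure} is read as the full preimage. You should replace the injectivity claim by this two-sheeted covering statement.
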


\begin{proof}
To find the Haar measure on $\SO(3)_p$ in terms of nautical angles, we first exploit the group isomorphism $\mathcal{T}_p$ in Eq.\eqref{totalcorisoQpHpSO3}, which allows us to identify the Haar measure and integral on $\SO(3)_p$ with those on $\HH_p^\times/\QQ_p^\times$. Our strategy is to deduce the measure on the quotient $\HH_p^\times/\QQ_p^\times$ starting from the known Haar measure on the quaternion group $\HH_p^\times$, given in Eq.\eqref{Hmeasure}. To this end, we exploit Theorem 2.51 in \cite{folland2016course} which relates the respective measures. We restrict our derivation to the affine chart where $q_0 \neq 0$, since the complementary locus $q_0 = 0$ has zero Haar measure. Indeed, once the measure on $\HH_p^\times/\QQ_p^\times$ is explicitly determined on this chart, we employ the relations of Eq.\eqref{mainrelparq0} defined for $q_0\neq0$, to perform a change of variables from the quaternionic components to nautical angles.

Let us now define the subspace of quaternions with non-zero scalar part,
\beq 
\widetilde{U_0}\coloneqq\{\upxi= q_0+q_1\ii+q_2\jj+q_3\kk\in\HH_p^\times\ \midd\ q_0\neq0\}\subset \HH_p^\times, 
\eeq 
isomorphic to the topological space $\QQ_p^\times\times\QQ_p^3$ through $\upxi\leftrightarrow(q_0,q_1,q_2,q_3)$. Since $q_0\neq0$, every $\upxi \in \widetilde{U_0}$ is of the form $\upxi= q_0+q_1\ii+q_2\jj+q_3\kk = q_0\left(1+\frac{q_1}{q_0}\ii+\frac{q_2}{q_0}\jj+\frac{q_3}{q_0}\kk\right)$, i.e., it admits a unique factorization $\upxi = q_0 \widetilde{x}$, where $q_0 \in \QQ_p^\times$ and $\widetilde{x}\coloneqq 1+x_1\ii +x_2\jj +x_3\kk\in\widetilde{U_0},$ having set $x_i\coloneqq\frac{q_i}{q_0}$, $i=1,2,3$
%, $(x_1, x_2, x_3)\in \QQ_p^3$
. The quotient 
\beq 
U_0\coloneqq \widetilde{U_0}/\QQ_p^\times \subset \HH_p^\times/\QQ_p^\times,
\eeq 
is isomorphic to the affine chart of homogeneous coordinates
$\{[q_0:q_1:q_2:q_3]\in\mathrm{P}^3(\QQ_p)\ \midd\ q_0\neq0\}$ through the mapping $\upxi\QQ_p^\times\leftrightarrow[q_0:q_1:q_2:q_3]$. Again, we can write 
\beq
[q_0:q_1:q_2:q_3] = \left[1:\frac{q_1}{q_0}:\frac{q_2}{q_0}:\frac{q_3}{q_0}\right] = [1:x_1:x_2:x_3]
\eeq
and therefore 
$\widetilde{x}\QQ_p^\times\leftrightarrow[1:x_1:x_2:x_3]$.

Actually, the map
\beq \label{eq:hoemoTphi0}
\varphi_0\colon \QQ_p^3\to U_0,\quad (x_1,x_2,x_3)\mapsto %[1:x_1:x_2:x_3]
\widetilde{x}\QQ_p^\times
\eeq 
is an homeomorphism. Indeed, $\varphi_0$ is clearly  surjective and it is injective because  
$\widetilde{x}\QQ_p^\times = \widetilde{y}\QQ_p^\times$ if and only if there exists $\lambda\in\QQ_p^\times$ such that $\widetilde{x}= \lambda\widetilde{y}$, which implies $\lambda=1$. Moreover, both $\varphi_0,\varphi_0^{-1}$ are continuous by construction.

We want to use Theorem 2.51 in \cite{folland2016course} to relate the Haar measure on the group $G=\HH_p^\times$ to its quotient by its closed central subgroup $H=\QQ_p^\times$. Then, for every $f\in C_c(\HH_p^\times)$ we have
\beq \label{eq:primointe}
\int_{\HH_p^\times} f(\upxi) \ d\mu_{\HH_p^\times}(\upxi) = \int_{\HH_p^\times/\QQ_p^\times} \left( \int_{\mathbb{Q}_p^\times} f(\upxi q_0) \, d\mu_{\QQ_p^\times}(q_0) \right) d\mu_{\HH_p^\times/\QQ_p^\times}(\upxi\QQ_p^\times).
\eeq
As discussed in Remark \ref{rem:projectiverel}, the hyperplane $\widetilde{V_0}\coloneqq\{\upxi\in \HH_p^\times\ \midd \ q_0=0\}=\HH_p^0\setminus\{\mathbf{0}\}$ has codimension $1$ in $\HH_p^\times$ and hence it has zero measure with respect to the Haar measure on $\HH_p^\times$. A similar argument holds for its projection $V_0\coloneqq \widetilde{V_0}/\QQ_p^\times$ in $\HH_p^\times/\QQ_p^\times$. Consequently, the Haar measures $\mu_{\HH_p^\times/\QQ_p^\times}$ and $\mu_{\HH_p^\times}$ are fully supported on $U_0 $ and $\widetilde{U_0}$ respectively. 
Then, Eq.\eqref{eq:primointe} becomes
\beq \label{eq:secondointe}
\int_{\widetilde{U_0}} f(\upxi) \ d\mu_{\HH_p^\times}(\upxi) = \int_{U_0} \left( \int_{\mathbb{Q}_p^\times} f(\widetilde{x} q_0) \ d\mu_{\QQ_p^\times}(q_0) \right) d\mu_{\HH_p^\times/\QQ_p^\times}(\widetilde{x}\QQ_p^\times).
\eeq 

We now want to express the measure on $\widetilde{U_0}$ using coordinates. Recall from Eq.\eqref{Hmeasure} that $d\mu_{\HH_p^\times}(\upxi) = \frac{dq_0dq_1dq_2dq_3}{\lvert Q_+^{(4)}(q_0,q_1,q_2,q_3)\rvert_p^2}$. Considering the change of variables $(q_0, q_1, q_2, q_3) \mapsto (q_0, x_1, x_2, x_3)$ on $\QQ_p^\times\times\QQ_p^3$ with $x_i = \frac{q_i}{q_0},$ we obtain that the determinant of its Jacobian is 
\beq 
\det\frac{\partial(q_0,q_1,q_2,q_3)}{\partial(q_0,x_1,x_2,x_3)} = \det\begin{pmatrix}
1 & 0 & 0 & 0 \\
x_1 & q_0 & 0 & 0 \\
x_2 & 0 & q_0 & 0 \\
x_3 & 0 & 0 & q_0
\end{pmatrix} =q_0^3.
\eeq
By the change of variables theorem we obtain $dq_0 dq_1 dq_2 dq_3 = \lvert q_0\rvert_p^3 dq_0 dx_1 dx_2 dx_3$. Furthermore, we can write $Q_+^{(4)}(q_0, q_1, q_2, q_3)=q_0^2Q_+^{(4)}(1, x_1, x_2, x_3)$ and then the Haar measure on $\HH_p$ factorizes as
\beq 
d\mu_{\HH_p^\times}(\upxi) = 
d\mu_{\QQ_p^\times}(q_0)\, d\nu(x_1,x_2,x_3),
\eeq 
where $d\mu_{\QQ_p^\times}(q_0) = \frac{dq_0}{\lvert q_0\rvert_p}$ is the Haar measure on the multiplicative group $\QQ_p^\times$ and
\beq \label{eq:dnusuQp3}
d\nu(x_1,x_2,x_3)\coloneqq \frac{dx_1dx_2dx_3}{ \lvert Q_+^{(4)}(1,x_1,x_2,x_3)\rvert_p^2}.
\eeq 
By Fubini's theorem, the l.h.s. of Eq.\eqref{eq:secondointe} becomes
\beq \label{eq:primomembro}
\int_{\widetilde{U_0}} f(\upxi) \ d\mu_{\HH_p^\times}(\upxi) 
 = \int_{\QQ_p^3} \left( \int_{\QQ_p^\times} f\left(
\widetilde{x}(x_1,x_2,x_3)q_0\right)\ d\mu_{\QQ_p^\times}(q_0)\right) d\nu(x_1,x_2,x_3).
\eeq 

Now we push the integral on $\QQ_p^3$ forward the quotient $U_0$ through the homeomorphism $\varphi_0 \colon \QQ_p^3 \to U_0$ by using the push-forward measure of the measure $\nu$ through $\varphi_0,$ that is ${\varphi_0}_\ast \nu(\mathcal{E})\coloneq\nu\circ \varphi_0^{-1}(\mathcal{E})$ for every Borel set $\mathcal{E}\subseteq U_0$. Then, for every $F\in C_c(U_0)$ we have 
\beq
\int_{U_0} F(\widetilde{x}\QQ_p^\times)\ d({\varphi_0}_\ast\nu)(\widetilde{x}\QQ_p^\times)=\int_{\QQ_p^3} F(\varphi_0(x_1,x_2,x_3))\ d\nu(x_1,x_2,x_3). 
\eeq
By using the fact that $F(\widetilde{x}\QQ_p^\times) = \int_{\QQ_p^\times} f\left(\widetilde{x}
q_0\right)d\mu_{\QQ_p^\times}(q_0)$ and, since in Eq.\eqref{eq:primomembro} we have 
\beq
\int_{\QQ_p^\times} f\left(\widetilde{x}(x_1,x_2,x_3)q_0\right)d\mu_{\QQ_p^\times}(q_0) = F(\varphi_0(x_1,x_2,x_3)),
\eeq
we obtain that Eq.\eqref{eq:primomembro} translates to:
\beq \label{eq:pushforwardLHS}
\int_{\widetilde{U_0}} f(\upxi) \ d\mu_{\HH_p^\times}(\upxi) = \int_{U_0} \left( \int_{\QQ_p^\times} f(\widetilde{x}q_0) \ d\mu_{\QQ_p^\times}(q_0) \right)d({\varphi_0}_\ast\nu)(\widetilde{x}\QQ_p^\times).
\eeq 
Comparing the right-hand sides of Eqs.\eqref{eq:secondointe} and \eqref{eq:pushforwardLHS} we observe that they are equal for every $f\in C_c(\HH_p^\times)$ and hence, by the uniqueness of the Haar measure, it implies
\beq \label{eq:arrivoqui2}
d\mu_{\HH_p^\times/\QQ_p^\times}\big\rvert_{U_0} = d({\varphi_0}_\ast\nu).
\eeq

Finally, let us remember that $D^{c}\coloneqq\{(\alpha,\beta,\gamma)\in (\QQ_p\cup\{\infty\})^3\ \midd\ 1-p\alpha\beta\gamma\neq0\}$. Exploiting Eq.\eqref{mainrelparq0} in $q_0=1$ according to Eq.\eqref{eq:dnusuQp3}, we provide the map 
\beq 
\ell\colon D^c\to \QQ_p^3,\qquad (\alpha,\beta,\gamma)\mapsto\left(x_1=\frac{\frac{p}{v}\beta\gamma-\alpha}{1-p\alpha\beta\gamma}, x_2=\frac{\beta-\alpha\gamma}{1-p\alpha\beta\gamma}, x_3=\frac{\frac{\gamma}{v}-\alpha\beta}{1-p\alpha\beta\gamma}\right),
\eeq 
and evaluate its Jacobian $J$ as
\beq
\begin{split}
    J&=\begin{pmatrix}
    \frac{\partial x_1}{\partial \alpha}&\frac{\partial x_1}{\partial \beta}&\frac{\partial x_1}{\partial \gamma}\\
    \frac{\partial x_2}{\partial \alpha}&\frac{\partial x_2}{\partial \beta}&\frac{\partial x_2}{\partial \gamma}\\
    \frac{\partial x_3}{\partial \alpha}&\frac{\partial x_3}{\partial \beta}&\frac{\partial x_3}{\partial \gamma}
\end{pmatrix}=\begin{pmatrix}
J_{11}&J_{12}&J_{13}\\J_{21}&J_{22}&J_{23}\\J_{31}&J_{32}&J_{33}
\end{pmatrix},
\end{split}
\eeq
where
\begin{align}
    &J_{11}=-\dfrac{p \beta \gamma \left(v \alpha - p\, \beta \gamma\right)}{v \left(-1 + p \alpha \beta \gamma\right)^{2}} + \dfrac{1}{-1 + p \alpha \beta \gamma},\\
    &J_{12}=-\dfrac{p \alpha \gamma \left(v \alpha - p\beta \gamma\right)}{v \left(-1 + p \alpha \beta \gamma\right)^{2}} - \dfrac{p\, \gamma}{v \left(-1 + p \alpha \beta \gamma\right)},\\
    &J_{13}=-\dfrac{p \alpha \beta \left(v \alpha - p \beta \gamma\right)}{v \left(-1 + p \alpha \beta \gamma\right)^{2}} - \dfrac{p\, \beta}{v \left(-1 + p \alpha \beta \gamma\right)},\\
    &J_{21}=\dfrac{p\beta \gamma \left(\beta - \alpha \gamma\right)}{\left(-1 + p \alpha \beta \gamma\right)^{2}} + \dfrac{\gamma}{-1 + p \alpha \beta \gamma},\\
    &J_{22}=\dfrac{p\alpha \gamma \left(\beta - \alpha \gamma\right)}{\left(-1 + p \alpha \beta \gamma\right)^{2}} - \dfrac{1}{-1 + p \alpha \beta \gamma},\\
    &J_{23}=\dfrac{p\alpha \beta \left(\beta - \alpha \gamma\right)}{\left(-1 + p \alpha \beta \gamma\right)^{2}} + \dfrac{\alpha}{-1 + p \alpha \beta \gamma},\\
    &J_{31}=\dfrac{p\beta \left(v \alpha \beta - \gamma\right)\gamma}{v \left(-1 + p \alpha \beta \gamma\right)^{2}} + \dfrac{\beta}{-1 + p \alpha \beta \gamma},\\
    &J_{32}=-\dfrac{p\alpha \left(v \alpha \beta - \gamma\right)\gamma}{v \left(-1 + p \alpha \beta \gamma\right)^{2}} + \dfrac{\alpha}{-1 + p \alpha \beta \gamma},\\
    &J_{33}=-\dfrac{p\alpha \beta \left(v \alpha \beta - \gamma\right)}{v \left(-1 + p \alpha \beta \gamma\right)^{2}} - \dfrac{1}{v \left(-1 + p \alpha \beta \gamma\right)}.
\end{align}
The $p$-adic absolute value of the determinant of $J$ is
\begin{equation}\label{eq:abdetjacangoli}
    \left|\det J\right|_p= \left| -\frac{(\beta^{2} p - 1)\, (\alpha^{2} v - 1)\, (v - \gamma^{2} p)}{v^{2}\, (\alpha \beta \gamma p - 1)^{4}}\right|_p.
\end{equation}
We also calculate the $p$-adic absolute value of the quadratic form $Q_+^{(4)}(\mathbf{q})$ squared that is, by \eqref{Hmeasure}, the weight (the inverse of the weight) of the Haar measure on $\HH_p^{\times}$
\begin{equation}
    \big\lvert Q_+^{(4)}(\mathbf{q})\big\rvert_p^2 =\left|\frac{(\beta^{2} p + 1)^{2}\, (\alpha^{2} v - 1)^{2}\, (v - \gamma^{2} p)^{2}}{v^{2}\, (\alpha \beta \gamma p - 1)^{4}}\right|_p.
\end{equation}
Finally, after multiplying and dividing by a factor $1=\left|\frac{1}{v}\right|_p$ since $v$ is a $p$-adic unit, produces the following weight:
\begin{equation}\label{SO3weight}
    \frac{\left|\det J\right|_p}{\big\lvert Q_+^{(4)}(\mathbf{q})\big\rvert_p^2}=\left|\frac{1-p\beta^{2}}{1+p\beta^{2}}\frac{1}{(1-v\alpha^{2})(1+p\beta^{2})%(v - p\gamma^{2}) = v*
    (1-\frac{p}{v}\gamma^2)}\right|_p.
\end{equation}
Let us now focus on the quantity $\left\lvert\frac{1-p\beta^{2}}{1+p\beta^{2}}\right\rvert_p$. This is equal to $1$ for $\beta=0$. For every $\beta\in\mathbb{Q}_p^\times$, there exist unique $n\in\mathbb{Z}$ and $u\in\mathbb{U}_p$ such that $\beta=p^nu$, and then $\lvert \pm p\beta^2\rvert_p = p^{-(2n+1)}\neq1$. By the strong triangle inequality, the quantity $\lvert 1\pm p\beta^2\rvert_p=\max\{\lvert1\rvert_p, \lvert \pm p\beta^2\rvert_p\}=\max\{1,p^{-(2n+1)}\}$ is the same for both signs and hence $\lvert 1\pm p\beta^2\rvert_p\neq0$. Lastly, for $\beta=\infty$, we have $\frac{1-p\beta^{2}}{1+p\beta^{2}}=-1$ and then for every $\beta\in\mathbb{Q}_p\cup\{\infty\}$ we have that $\left\lvert\frac{1-p\beta^{2}}{1+p\beta^{2}}\right\rvert_p=1$, which can be simplified in the expression \eqref{SO3weight}.
By using the change of variables theorem, we obtain
\beq 
d\nu(x_1,x_2,x_3)=\frac{d\alpha d\beta d\gamma}{\left\lvert(1-v\alpha^{2})(1+p\beta^{2})(1-\frac{p}{v}\gamma^2)\right\rvert_p}.
\eeq

To find the Haar measure on $\SO(3)_p$, we use the isomorphism $\mathcal{T}_p\colon \HH_p^\times/\QQ_p^\times\to\SO(3)_p$ in Eq.\eqref{totalcorisoQpHpSO3}. Indeed, for any $\phi \in C_c(\SO(3)_p)$, 
\begin{align}
\int_{\SO(3)_p} \phi(\cR) \ d\mu_{\SO(3)_p}(\cR) &= 
\int_{\HH_p^\times/\QQ_p^\times} \phi(\mathcal{T}_p(\upxi\QQ_p^\times)) \ d\mu_{\HH_p^\times/\QQ_p^\times}(\upxi\QQ_p^\times)\nonumber \\
&= 
\int_{U_0} \phi(\mathcal{T}_p(\widetilde{x}\QQ_p^\times)) \  d({\varphi_0}_\ast\nu)(\widetilde{x}\QQ_p^\times)\nonumber \\
&= \int_{D^c} \phi(\mathcal{T}_p(\varphi_0(\ell(\alpha,\beta,\gamma)))) \frac{d\alpha  d\beta  d\gamma}{\left\lvert (1-v\alpha^{2})(1+p\beta^{2})(1-\frac{p}{v}\gamma^2) \right\rvert_p}.
\end{align}
Denoting $\cR(\alpha,\beta,\gamma) = \mathcal{T}_p(\varphi_0(\ell(\alpha,\beta,\gamma)))$, the Haar measure on $\SO(3)_p$ in terms of nautical angles $\alpha,\beta,\gamma$ such that $\mathcal{R}(\alpha,\beta,\gamma)=\mathcal{R}_z(\alpha)\mathcal{R}_y(\beta)\mathcal{R}_x(\gamma)$ is given by 
\beq \label{finalweight}
d\mu_{\SO(3)_p}(\cR(\alpha,\beta,\gamma)) =  \frac{d\alpha d\beta d\gamma}{\left\lvert (1-v\alpha^{2})(1+p\beta^{2})(1-\frac{p}{v}\gamma^2) \right\rvert_p}.
\eeq 
From Eq.\eqref{finalweight} we see that $d\mu_{\SO(3)_p}(\cR(\alpha,\beta,\gamma))$ factorizes as the product of the measures on $\SO(2)_{p,d}$ with $d\in\{-v,p,-p/v\}$ (see e.g. Eq.\eqref{so2measure}). Finally, using Eqs.$\eqref{so2normfact},$ it entails that the normalizing factor for $d\mu_{\SO(3)_p}(\cR(\alpha,\beta,\gamma))$ is
\beq
\left(\mu_{\SO(2)_{p,-v}}(\SO(2)_{p,-v})\cdot \mu_{\SO(2)_{p,p}}(\SO(2)_{p,p})\cdot\mu_{\SO(2)_{p,-p/v}}(\SO(2)_{p,-p/v})\right)^{-1}=\left[4\left(1+\frac{1}{p}\right)\right]^{-1}.
\eeq

\end{proof}

\begin{remark}
The weight \eqref{finalweight} of the Haar measure on $\SO(3)_p$ is very peculiar because it gives us a product measure on three axes and no mixed terms. In particular, since we proved that $\left\lvert\frac{1-p\beta^{2}}{1+p\beta^{2}}\right\rvert_p=1,$ we see that Haar measure \eqref{finalhaarmeasure} on $\SO(3)_p$ factorizes up to the global normalization constant as the product of the Haar measures associated to $\SO(2)_{p,-v},\SO(2)_{p,p},\SO(2)_{p,-\frac{p}{v}}$. 
This factorization is a peculiar feature of the non-Archimedean setting --- in contrast to the Haar measure $d\mu_{\SO(3)_\RR}(R(\alpha,\beta,\gamma)) = \sin\beta d\alpha d\beta d\gamma$ --- arising from the interplay between the rational nature of the nautical parametrization and the features of the $p$-adic absolute value. This could also be traced back to structural results on semisimple Lie groups, as the existence of KAK or Cartan decomposition of $\SO(3)_p$.
\end{remark}

Restricting the Haar measure $\mu_{\SO(3)_p}$ to a subgroup $\SO(2)_{p,d}$ yields the null measure: since $\SO(2)_{p,d}$ has codimension $2$ in $\SO(3)_p$, one has $\mu_{\SO(3)_p}(\SO(2)_{p,d})=0$. Thus, the measure on $\SO(2)_{p,d}$ cannot be recovered by direct geometric restriction of $\mu_{\SO(3)_p}$. Instead, the connection emerges at the level of the local coordinate densities. The $\beta$-dependent factor $\sin\beta d\beta$ in the Haar measure on $\SO(3)_{\RR}$ is different from the Haar measure $d\beta$ on $\SO(2)_\RR$. However, because of the peculiar factorization of the Haar measure on $\SO(3)_p$, its individual factors formally coincide with the expressions of the Haar measures on the respective subgroups $\SO(2)_{p,-v}$, $\SO(2)_{p,p}$, and $\SO(2)_{p,-p/v}$.

\begin{corollary}
Let $\phi\in L^1(\SO(3)_p)$ depending only on $\cR_z(\alpha),\cR_y(\beta),\cR_x(\gamma),$ respectively. Then, the Haar measure relative to rotations around reference axes yields the following Haar integrals:
\beq
\int\limits_{\SO(3)_p}\phi(\cR_z(\alpha)) \ d\mu_{\SO(3)_p}(\cR)=\left(1+\frac{1}{p}\right)^{-1}\int_{\QQ_p\cup \infty} (\phi  \circ  \mathcal{T}_p)(\alpha)  \frac{1}{\left|1-v\alpha^2\right|_p} \ d\alpha,
\eeq
\beq 
\int\limits_{\SO(3)_p}\phi(\cR_y(\beta)) \ d\mu_{\SO(3)_p}(\cR)=\frac{1}{2}\int_{\QQ_p\cup \infty} (\phi  \circ  \mathcal{T}_p)(\beta)  \frac{1}{\left|1+p\beta^2\right|_p} \ d\beta,
\eeq 
\beq
\int\limits_{\SO(3)_p}\phi(\cR_x(\gamma)) \ d\mu_{\SO(3)_p}(\cR)=\frac{1}{2}\int_{\QQ_p\cup \infty} (\phi \circ  \mathcal{T}_p)(\gamma)  \frac{1}{\left|1-\frac{p}{v}\gamma^2\right|_p} \ d\gamma.
\eeq
\end{corollary}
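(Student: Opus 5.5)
The corollary follows from Theorem~\ref{finalhaarmeasuret}: integrate out the two angles on which $\phi$ does not depend, and use the normalization constants \eqref{so2normfact}. I treat the case $\phi(\cR)=\phi(\cR_z(\alpha))$ in detail; the other two are identical with the roles of the angles permuted.

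\emph{Step 1: write the Haar integral in nautical coordinates.} By Theorem~\ref{finalhaarmeasuret}, and because the exceptional set $D$ of Remark~\ref{rem:projectiverel} is Haar-null, we have
\begin{equation*}
\int_{\SO(3)_p}\phi\, d\mu_{\SO(3)_p}=\left[4\left(1+\tfrac1p\right)\right]^{-1}\int \phi(\cR_z(\alpha))\,\frac{d\alpha\, d\beta\, d\gamma}{\lvert (1-v\alpha^2)(1+p\beta^2)(1-\frac pv\gamma^2)\rvert_p}.
\end{equation*}
The integrand depends on $\beta,\gamma$ only through the density, and the density is a product of three one-variable functions. Since $\phi\in L^1$ and the density is nonnegative, Tonelli/Fubini lets me integrate $\beta$ and $\gamma$ separately.

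\emph{Step 2: evaluate the $\beta$ and $\gamma$ integrals.} By \eqref{so2measure} these integrals are exactly the total masses of $\SO(2)_{p,p}$ and $\SO(2)_{p,-p/v}$, and \eqref{so2normfact} gives both as $2$. The prefactor therefore becomes $\left[4(1+\frac1p)\right]^{-1}\cdot 4=(1+\frac1p)^{-1}$. What remains is the $\alpha$-integral with weight $\lvert 1-v\alpha^2\rvert_p^{-1}$. I read $\phi(\cR_z(\alpha))$ as $(\phi\circ\mathcal T_p)(\alpha)$ through the identification of Corollary~\ref{q0solutions} and \eqref{convform3}, which is just notation.

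For the $\beta$-case the leftover factors are $\mu(\SO(2)_{p,-v})\cdot\mu(\SO(2)_{p,-p/v})=2(1+\frac1p)$. Multiplied by $\left[4(1+\frac1p)\right]^{-1}$ this gives $\frac12$. The $\gamma$-case is symmetric: the leftover factors are $(1+\frac1p)\cdot 2$, which again gives $\frac12$.

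\emph{Main obstacle.} The delicate point is justifying that each one-dimensional integral over $\QQ_p\cup\{\infty\}$ with the given weight really equals the normalization constant in \eqref{so2normfact}, that is, that the point $\infty$ and the chart conventions of \eqref{so2measure} match the affine Lebesgue integral appearing in Theorem~\ref{finalhaarmeasuret}. I would handle this by noting that $\{\infty\}$ is null and that on $\QQ_p$ the weight $\lvert 1+d\sigma^2\rvert_p^{-1}$ is integrable. For $d=-v$, splitting into $\lvert\sigma\rvert_p\le1$ and $\lvert\sigma\rvert_p>1$ gives $1+\sum_{k\ge1}(p^k-p^{k-1})p^{-2k}=1+\frac1p$. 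For $d=p$ or $d=-p/v$, $\lvert 1+d\sigma^2\rvert_p=\max(1,p^{-1}\lvert\sigma\rvert_p^2)$ and the analogous geometric sum gives $2$. These direct computations confirm \eqref{so2normfact} and complete the argument.
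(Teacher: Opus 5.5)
Your proposal is correct and follows the same route as the paper's proof. You apply the factorized density of Theorem~\ref{finalhaarmeasuret}, integrate out the two angles on which $\phi$ does not depend via Fubini, and use the total masses \eqref{so2normfact} to turn $[4(1+\frac1p)]^{-1}$ into $(1+\frac1p)^{-1}$, $\frac12$ and $\frac12$. The only difference is that you check the constants $1+\frac1p$ and $2$ by a direct geometric-series computation rather than quoting them from the thesis result the paper cites; that check is correct.
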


\begin{proof}
    From Theorem~\ref{finalhaarmeasuret} we have that the Haar integral of $\phi$ factorizes as a product of three integrals over $\SO(2)_{p,-v}, \SO(2)_{p,p}$ and $\SO(2)_{p,-p/v}$ with the corresponding measures. From Theorem~5.40 in $\cite{thesis}$ and by using Theorem~\ref{finaliso}, the result follows trivially.
    
\end{proof}

\section{Conclusions}

In this work, we have explicitly constructed the normalized Haar measure on the compact $p$-adic special orthogonal group $\text{SO}(3)_p$ using a nautical (Cardano) parameterization. By exploiting the topological isomorphism between $\text{SO}(3)_p$ and the projective $p$-adic quaternions $\HH_p^\times / \QQ_p^\times$, we derived the corresponding change-of-variables formulas. The exact computation of the Jacobian in the $p$-adic setting demonstrates a highly non-trivial structural property: the Haar measure density factorizes completely across the three nautical angles, without mixed terms. This explicit parameterization furnishes a practical and essential tool for performing integrations over the group manifold, thereby supporting applications in $p$-adic quantum mechanics and quantum information theory, where rotational symmetries are fundamental to the description of spins and qubits. Building upon these results, several compelling open problems emerge. While the quaternionic isomorphism provides an elegant derivation, a natural open question is whether the Haar measure can be computed directly on the $\SO(3)_p$ manifold. This geometric approach requires deriving the $p$-adic invariant differential forms of maximal degree strictly from the nautical matrix parametrization, entirely bypassing the quaternionic algebra to reveal the intrinsic differential geometry of the group. With an explicit angular integral available, it is now possible to perform a rigorous abstract harmonic analysis starting from the regular representation of $\SO(3)_p$. Invoking the Peter-Weyl theorem allows the explicit decomposition of $L^2(\SO(3)_p)$ to compute characters and Clebsch-Gordan coefficients. This mathematical framework provides the analytical backbone to the recent theoretical constructions of $p$-adic qubits, entanglement, and universal quantum logic gates \cite{SIMW26}. Parametrizing higher dimensional $p$-adic special orthogonal groups is highly non-trivial, since standard Euler and nautical angle decompositions are no longer available. Thus, a challenging case of study is adapting the nautical factorization to parameterize rotations in $\SO(4)_p$ and express its related Haar measure. Finally, the structurally anomalous even prime $p=2$ rejects standard Cardano parameterizations due to wild ramification. While one might intuitively suggest to invoke a $p$-adic version of the KAK decomposition \cite{Helgason}, we observe that such a continuous instrument is structurally degenerate here. Indeed, since $\SO(3)_p$ is already a maximal compact subgroup of itself, the abelian factor is trivial, rendering a literal KAK decomposition conceptually inapplicable. Therefore, identifying discrete algebraic factorizations will be crucial for decomposing arbitrary operations in $2$-adic quantum information processing.

%%%%%%%%%%%%%%%%%%%%%%%%%%%%%%%%%%%%%%%%%%%%%%%%%%

\end{document}